  \providecommand\BibTeX{{%
    \normalfont B\kern-0.5em{\scshape i\kern-0.25em b}\kern-0.8em\TeX}}}
\newtheorem{assumption}{Assumption}
\newtheorem{proposition}{Proposition}
\definecolor{mygray}{gray}{.9}
\begin{document}

\title{Robust Uplift Modeling with Large-Scale Contexts for\\  Real-time Marketing}

\author{Zexu Sun}
\affiliation{%
  \institution{Gaoling School of Artificial Intelligence, Renmin University of China}
  \city{Beijing}
  \country{China}
}
\email{sunzexu21@ruc.edu.cn}

\author{Qiyu Han}
\affiliation{%
  \institution{School of Statistics, Renmin University of China}
  \city{Beijing}
  \country{China}
}
\email{hanqiyu@ruc.edu.cn}

\author{Minqin Zhu}
\affiliation{%
  \institution{Department of Computer Science and Technology, Zhejiang University}
  \city{Hangzhou}
  \country{China}
}
\email{minqinzhu@zju.edu.cn}

\author{Hao Gong}
\affiliation{%
  \institution{Kuaishou Technology}
  \city{Beijing}
  \country{China}
}
\email{gong_h@mail.nwpu.edu.cn}

\author{Dugang Liu}
\authornote{Corresponding Authors}
\affiliation{%
  \institution{College of Computer Science and Software Engineering, Shenzhen University}
  \city{Shenzhen}
  \country{China}}
\email{dugang.ldg@gmail.com}

\author{Chen Ma}
\authornotemark[1]
\affiliation{%
  \institution{Department of Computer Science, City University of Hong Kong}
  \country{Hong Kong SAR}
}
\email{chenma@cityu.edu.hk}

\renewcommand{\shortauthors}{Zexu Sun, et al.}

\begin{abstract}
Improving user engagement and platform revenue is crucial for online marketing platforms. Uplift modeling is proposed to solve this problem, which applies different treatments (e.g., discounts, bonus) to satisfy corresponding users. Despite progress in this field, limitations persist. Firstly, most of them focus on scenarios where only user features exist. However, in real-world scenarios, there are rich contexts available in the online platform (e.g., short videos, news), and the uplift model needs to infer an incentive for each user on the specific item, which is called real-time marketing. Thus, only considering the user features will lead to biased prediction of the responses, which may cause the cumulative error for uplift prediction. Moreover, due to the large-scale contexts, directly concatenating the context features with the user features will cause a severe distribution shift in the treatment and control groups. Secondly, capturing the interaction relationship between the user features and context features can better predict the user response. To solve the above limitations, we propose a novel model-agnostic Robust Uplift Modeling with Large-Scale Contexts (UMLC) framework for Real-time Marketing. Our UMLC includes two customized modules. 1) A response-guided context grouping module for extracting context features information and condensing value space through clusters. 2) A feature interaction module for obtaining better uplift prediction. Specifically, this module contains two parts: a user-context interaction component for better modeling the response; a treatment-feature interaction component for discovering the treatment assignment sensitive feature of each instance to better predict the uplift. Moreover, we conduct extensive experiments on a synthetic dataset and a real-world product dataset to verify the effectiveness and compatibility of our UMLC.
\end{abstract}

\begin{CCSXML}
<ccs2012>
 <concept>
  <concept_id>00000000.0000000.0000000</concept_id>
  <concept_desc>Do Not Use This Code, Generate the Correct Terms for Your Paper</concept_desc>
  <concept_significance>500</concept_significance>
 </concept>
 <concept>
  <concept_id>00000000.00000000.00000000</concept_id>
  <concept_desc>Do Not Use This Code, Generate the Correct Terms for Your Paper</concept_desc>
  <concept_significance>300</concept_significance>
 </concept>
 <concept>
  <concept_id>00000000.00000000.00000000</concept_id>
  <concept_desc>Do Not Use This Code, Generate the Correct Terms for Your Paper</concept_desc>
  <concept_significance>100</concept_significance>
 </concept>
 <concept>
  <concept_id>00000000.00000000.00000000</concept_id>
  <concept_desc>Do Not Use This Code, Generate the Correct Terms for Your Paper</concept_desc>
  <concept_significance>100</concept_significance>
 </concept>
</ccs2012>
\end{CCSXML}

\ccsdesc[500]{Information systems}
\ccsdesc[100]{Online marketing}

\keywords{Uplift modeling, Large-scale contexts, Real-time marketing}


\settopmatter{printfolios=true}
\maketitle

\section{Introduction}\label{sec:intro}
With the development of online platforms, assigning specific incentives to increase user engagement and platform revenue is a crucial task in online marketing~\cite{liu2023explicit,sun2023robustness}. 
Since these incentives (\textit{e.g.}, bonus, discount) are usually related to the cost, different users may have different responses to these incentives. 
Then, a successful strategy is needed to discover the incentive-sensitive user group and only deliver the incentive to the user that tends to be converted, which is important in online platforms.
In recent years, uplift modeling, where various treatments are applied for different target users, has been proposed to achieve this purpose.  
In practice, we can observe only one type of user response, which may be for a certain incentive (\textit{i.e.}, treatment group) or no incentive (\textit{i.e.}, control group)~\cite{diemert2018large}. 
Uplift modeling aims to capture the differences between the treatment and control groups, which is known as the individual treatment effect (ITE) or uplift. 
Many uplift models have been proposed to facilitate online marketing of online platforms~\cite{zhang2021unified}.

The existing works can be divided into two directions. 
1) \textit{Machine-learning based methods}. In detail, these works can be further divided into meta-learner-based and tree-based. 
S-Learner~\cite{kunzel2019metalearners} and T-Learner~\cite{kunzel2019metalearners} are the representative meta-learner methods, which design a global base learner or two base learners for the samples in treatment and control groups, respectively. 
Uplift Tree~\cite{rzepakowski2012decision} and Causal Forest~\cite{athey2019estimating} are two commonly used tree-based methods. 
Specifically, the hierarchical tree is used to separate user populations into subgroups that exhibit sensitivity to specific treatments~\cite{radcliffe2011real}. 
2) \textit{Representation-learning based methods}. With the development of deep learning, the representation learning becomes the main research direction for uplift prediction~\cite{mouloud2020adapting}. 
TARNet~\cite{shalit2017estimating} and CFRNet~\cite{shalit2017estimating} leverage a feature representation network to extract the feature information into the latent space, then to predict the responses in different user groups (\textit{i.e.}, treatment group and control group) and achieve representation balancing by using the Integral Probability Metrics (IPM).
StableCFR~\cite{wu2023stable} smooths the population distribution and upsample the underrepresented subpopulations, while balancing distributions between treatment and control groups.
\cite{liu2023unite} designs a unified framework for both one-side and two-side online marketing.
We focus on the representation-learning based methods because they can be more flexibly adapted to modeling the complex scenarios in many industrial systems.

Although existing methods of uplift modeling have shown promising results and achieved success in many practical problems, they generally only consider the existence of user features and do not model the rich context features. 
In some real-world scenarios, to better find the users that tend to convert in online platforms, we need to assign different treatments to users with different contexts.
For example, on short video platforms, the users may receive different coin incentives with different recommended short videos.  
We define this problem as real-time marketing. 
There are some challenges for solving the uplift problem in real-time marketing:
1) \textit{Distribution shift}: Random Control Trials (RCTs) are commonly used to collect data from the online platform for training the uplift models, which aim to ensure uniform distribution of user features between treatment and control groups. 
However, as shown in Figure~\ref{fig:example}, when considering the large-scale context features, directly concatenating the uncontrollable context features (e.g., short videos) will cause a significant distribution shift between the treatment and control groups. 
2) \textit{Feature interaction}: Modeling the feature interaction~\cite{wang2021dcn,sun2021fm2} can improve the prediction accuracy of the model. However, it is not sufficiently explored in previous uplift modeling works~\cite{belbahri2021qini,athey2019estimating}. With rich context features, discovering the relationships between the user features and context features, or the treatment assignment and the features, is essential for better predicting the uplift.

\begin{figure}[!t]
    \centering
    \includegraphics[width=\linewidth]{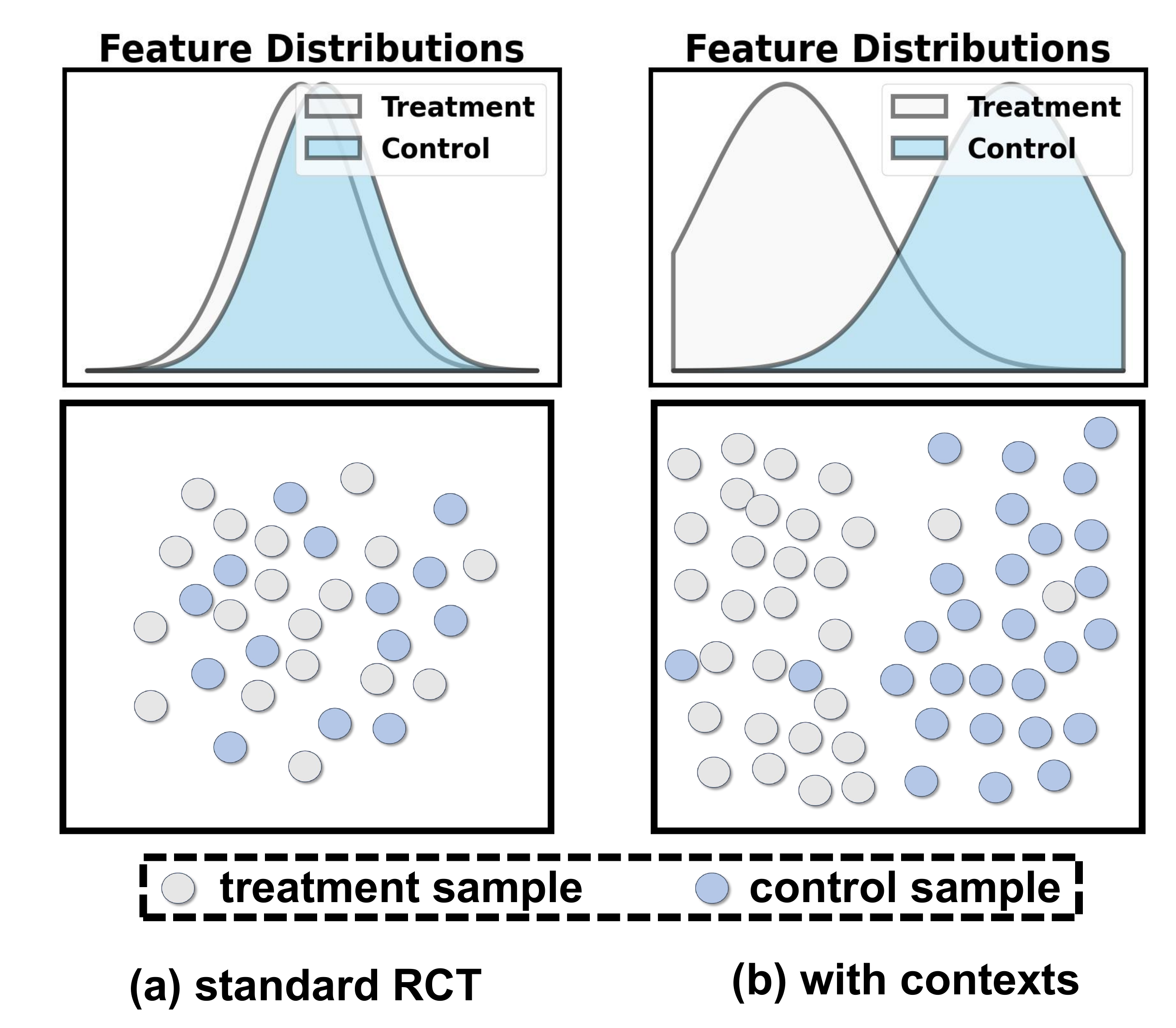}
    \vspace{-0.8cm}
    \caption{An example of distributions of standard RCTs and the RCTs considering the contexts. When considering the uncontrollable contexts in standard RCTs, there may be the significant distribution shift between the treatment and control groups.}
    \label{fig:example}
    \vspace{-0.8cm}
\end{figure}

In this paper, to solve the aforementioned challenges, we propose a novel model-agnostic Robust \underline{U}plift \underline{M}odeling with \underline{L}arge-Scale \underline{C}ontexts (UMLC) framework for Real-time Marketing.
In particular, our framework consists of two customized modules. 
1) A response-guided context grouping module aims to extract context features information through a response-guided regression network, and narrow the value space of context features by the cluster; 
2) A feature interaction module contains a user-context and a treatment-feature interaction parts can better capture the relationship between the user features and context features, which may obtain more reliable user response prediction;
a treatment-feature interaction module aims to discover the treatment assignment sensitive features (\textit{i.e.}, the features have a potential relationship with treatment assignment) of each instance to target the users that tend to convert with the specific context features.
Moreover, our contributions can be summarized as follows:
\begin{itemize}
    \item We propose a novel model-agnostic UMLC framework for solving the uplift problem in real-time marketing.
    \item We design a response-guided context grouping module, which handles the large-scale context with a distribution shift between the treatment and control groups. 
    \item We also design a feature interaction module, which builds the relationships between the user and context or treatment assignment and features for better uplift prediction.
    \item We conduct extensive experiments on a synthetic dataset and a product dataset to verify the effectiveness and compatibility of our UMLC. 
\end{itemize}

\section{Related Work}
\subsection{Real-time Marketing}
Real-time marketing appears as a result of consumers’ expectation for real-time connection
with product or service providers~\cite{buhalis2019real}. 
Real-time service has taken personalization of service beyond merely providing relevant content to targeted consumers.
For example, consumers’ personal preferences combined with external factors such as weather, traffic or season could inform context information needed to co-create value in real-time~\cite{buhalis2015socomo}.
With the development of online marketing, more and more platforms need to design effective strategy for incentivize sensitive users, including E-commerce companies like Taobao, rental service like Airbnb, short video recommendation like Kuaishou. 
The above mentioned platforms involve both user and rich contexts.
To further better utilize the incentives, a possible marketing strategy is to assign different incentives for each user with different contexts, which is defined as the real-time online marketing.
Recent works~\cite{jabbar2020real,liu2017distributed,liu2015bidding} in real-time marketing focus on the pricing for different contexts, the product management or the optimization procedure for bidding. 
However, for the online platforms, the aim of real-time online marketing is to delivery different incentives for each user with different contexts, which is not investigated in the previous works.

\subsection{Uplift Modeling}
The uplift modeling methods mainly focus on two research directions: 1) \textit{Machine-learning based methods}. The machine-learning based methods can be divided into two more detailed research categories, \textit{i.e.}, meta-learner based and tree-based. 
Meta-learner based methods aim to use any off-the-shelf estimator as the base learner for the uplift prediction. S-Learner~\cite{kunzel2019metalearners} and T-Learner~\cite{kunzel2019metalearners} leverage one global or two base learners for treatment and control groups, respectively. 
Tree-based methods aim to design the splitting criteria or use the ensemble methods to predict the uplift.
Causal Forest~\cite{athey2019estimating} uses the ensemble of uplift trees, and returns the average uplift prediction of them.
2) \textit{Representation-learning based methods.} Representation-learning based methods are more flexible than machine-learning based methods. 
CFRNet~\cite{shalit2017estimating} use a shared bottom to extract features into latent space, and obtain the balanced representation through IPM.
SITE~\cite{yao2018representation} preserves local similarity and balances data distributions simultaneously, by focusing on several hard samples in each mini-batch.
StableCFR~\cite{wu2023stable} smooths the population distribution and upsample the underrepresented subpopulations.
There are also some works focus on the specific scenario or the generalization ability of the uplift models.
UniTE~\cite{liu2023unite} proposes a uniform framework for one-side and two-side online marketing.
RUAD~\cite{sun2023robustness} defines a robustness case of uplift modeling and uses the adversarial training to solve it. 
However, these works only model the uplift of the users, without considering the effect of contexts.

\section{Preliminaries}
To formulate the problem, we follow the Neyman-Rubin potential outcome framework~\cite{rubin2005causal}, to define the uplift modeling problem. 
Let the observed sample set be $\mathcal{D}=\{\boldsymbol{x}^u_i, \boldsymbol{x}^c_i, t_i, y_i\}^n_{i=1}$ with $n$ samples. 
Without loss of generality, for each sample, assuming $y_i\in \mathcal{Y}\subset \mathbb{R}$ is a continuous response variable, $\boldsymbol{x}^u_i \in \mathcal{X}^u\subset \mathbb{R}^p$ is a vector of user features with $p$ elements, $\boldsymbol{x}^c_i \in \mathcal{X}^c\subset \mathbb{R}^q$ is a vector of context features with $q$ elements, and $t_i\in \{0,1\}$ denotes the treatment indicator variable, \textit{i.e}., whether to get an incentive delivery.
Note that the proposed framework can be easily extended to other types of uplift modeling problems.
For user $i$, the change in user response caused by an incentive $t_i$, \textit{i.e.}, individual treatment effect or uplift, denoted as $\tau_i$, is defined as the difference between the treatment response and the control response:
\begin{equation}\label{eq:ite}
\tau_i=y_i(1)-y_i(0),
\end{equation}
where $y_i{(0)}$ and $y_i{(1)}$ are the user responses of the control and treatment groups, respectively. 

In the ideal world, \textit{i.e.}, obtaining the responses of a user in both groups simultaneously, we can easily determine the uplift $\tau_i$ based on Eq.~\eqref{eq:ite}. 
However, in the real world, only one of the two responses is observed for any one user. 
For example, if we have observed the response of a customer who receives the discount, it is impossible to observe the response of the same customer when they do not receive a discount, where such responses are often referred to as counterfactual responses. 
Therefore, the observed response can be described as:
\begin{equation}
y_i=t_i y_i{(1)}+(1-t_i) y_i{(0)}.
\end{equation}
For the brevity of notation, we will omit the subscript $i$ in the following if no ambiguity arises.

As mentioned above, the uplift $\tau$ is not identifiable since the observed response $y$ is only one of the two necessary terms (\textit{i.e}., $y(1)$ and $y(0)$).
Fortunately, with some appropriate assumptions~\cite{liu2023explicit}, we can use the conditional average treatment effect (CATE) as an estimator for the uplift, where CATE is defined as:
\begin{equation}
\begin{aligned}
\tau(\boldsymbol{x}) & =\mathbb{E}\left(Y{(1)} \mid \boldsymbol{X}=\boldsymbol{x}\right)-\mathbb{E}\left(Y{(0)} \mid \boldsymbol{X}=\boldsymbol{x}\right) \\
& =\underbrace{\mathbb{E}(Y \mid T=1, \boldsymbol{X}=\boldsymbol{x})}_{\mu_1(\boldsymbol{x})}-\underbrace{\mathbb{E}(Y \mid T=0, \boldsymbol{X}=\boldsymbol{x}) }_{\mu_0(\boldsymbol{x})}.
\end{aligned} \label{eq:uplift}
\end{equation}
where $\boldsymbol{x} = [\boldsymbol{x}^u,\boldsymbol{x}^c]$ is the features of an instance, $[\cdot,\cdot]$ represents the concatenation.
Intuitively, the desired objective can be described as the difference between two conditional means $\tau(\boldsymbol{x})=\mu_1(\boldsymbol{x})-\mu_0(\boldsymbol{x})$.

\section{Methodology}
Our UMLC consists of two customized modules: the response-guided context grouping module and a feature interaction module.
The whole structure of our UMLC is shown in Figure~\ref{fig:model}, and we introduce the two modules as follows.
\begin{figure*}[!t]
    \centering
    \includegraphics[width=0.9\linewidth]{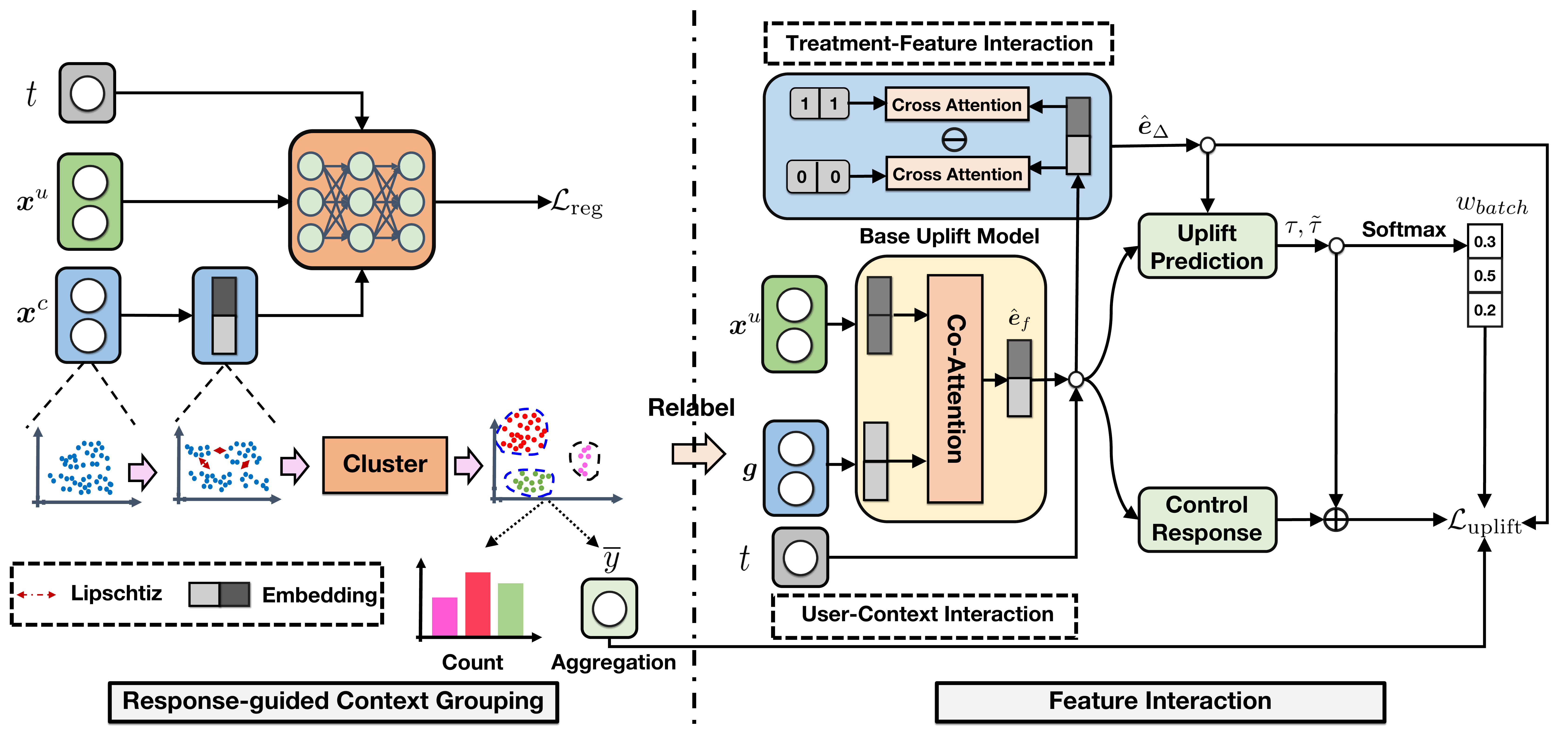}
        \vspace{-0.5cm}
    \caption{The overall structure of our UMLC framework. The left is the response-guided context grouping module, the right is the feature interaction module.}
    \label{fig:model}
\end{figure*}
\subsection{Response-guided Context Grouping}\label{sec:group}

\subsubsection{Motivation} 
As mentioned in Section~\ref{sec:intro}, when considering the large-scale contexts, there is a \textit{distribution shift} between treatment and control groups, which may introduce bias for uplift prediction.
In recent years, there have been works focusing on solving the treatment assignment bias between the user groups to obtain the unbiased treatment effect~\cite{shalit2017estimating,yao2018representation,wu2023stable}. 
The core ideas of these works are to learn a balanced representation in the latent space from a distribution perspective~\cite{shalit2017estimating}, or match the samples in the treatment and control groups from a sample perspective~\cite{yao2018representation}.

However, in our problem setting, the above-mentioned methods that ignore the specific influence of the contexts may be unreliable. 
For example, in real-world scenarios, each user can be associated with many contexts, such as a user may view many short videos a day, resulting in numerous instances. 
Therefore, directly concatenating the context features with user features may encounter the issue of \textit{variance inflation}: similar context features may preserve many different labels, leading to unreliable predictions.

To address the issue, we propose to consolidate contexts into discrete groups, and then use these groups as proxy contexts to enhance uplift prediction. 
Employing this strategy has the potential to reduce variance, but if the contexts in the same group are completely different, it may also introduce additional bias into the prediction of uplift~\cite{peng2023offline}.
Fortunately, if the merged contexts within the same group have a similar impact on the response, the resulting bias can be effectively mitigated.

\subsubsection{Analysis}

For each context $\boldsymbol{x}^c \in \mathcal{X}^c$, we define a corresponding group variable $\boldsymbol{g} \in \{0,1, 2, \ldots, K-1\}$ denoting the group that the contexts are clustered into.
Therefore, regarding the group variable $\boldsymbol{g}$ as a one-hot vector, we can define the conditional distribution on the group variable given context as $p\left(\boldsymbol{g} \mid \boldsymbol{x}^c\right)$. To support our method, we have the following assumptions.


\begin{assumption}\label{assum1}
We assume the response generation is of the form $y=h(\boldsymbol{x}^u, \boldsymbol{x}^c, t)+\epsilon$~\cite{hoyer2008nonlinear}, where $|h(\boldsymbol{x}^u, \boldsymbol{x}^c, t)| \leq B_h$ and $\epsilon$ is a noise term with zero mean. We also assume the contexts merged into one group have a similar effect on the response. It means that for arbitrary context pair $(\boldsymbol{x}^c_i, \boldsymbol{x}^c_j)$ satisfying $\mathbb{P}\left(\boldsymbol{g} |\boldsymbol{x}^c_i\right)=\mathbb{P}\left(\boldsymbol{g} |\boldsymbol{x}^c_j\right)$, we have
\begin{equation}
  \left|\mathbb{E}\left[y| \boldsymbol{x}^u, \boldsymbol{x}^c_i,t\right]-\mathbb{E}\left[y| \boldsymbol{x}^u, \boldsymbol{x}^c_j,t\right]\right| \leq \delta, \forall \boldsymbol{x}^u, t  ,i\neq j,
\end{equation}
where $|\cdot|$ denotes absolute value operation, $B_h$ and $\delta$ are two constants, and $\delta$ is less than $B_h$.  
\end{assumption}

With the above assumption, we attempt to depict the distance of function $h$ between different contexts and merge the contexts with small distances into groups. 
Given that the function space $h$ is of infinite dimension, it is impractical to compute the distance within the original function space. Then, we propose a method that involves transforming the resultant function $h$ into a finite-dimensional embedding, allowing for distance calculations within the projected space.
To realize this strategy, we first give the following assumption.
\begin{assumption}\label{assm2}
There exists a transformation function $\xi$ which satisfies $\forall \boldsymbol{x}^u,t, i\neq j \left|h\left(\boldsymbol{x}^u, \boldsymbol{x}^c_i, t\right)-h\left(\boldsymbol{x}^u,\boldsymbol{x}^c_j, t\right)\right| \leq \zeta \cdot\left\|\xi\left(\boldsymbol{x}^c_i\right)-\xi\left(\boldsymbol{x}^c_j\right)\right\|_2$ $+\eta$, $\zeta,\eta$ are two constants, $\|\cdot\|_2$ denotes the Euclidean distances~\cite{danielsson1980euclidean}. 
\end{assumption}

If a transformation $\xi$ is found to meet Assumption~\ref{assm2} with a small value $\eta$, then it is possible to perform clustering on the context embedding $\xi(\boldsymbol{x}_{c})$ using the Euclidean distance. Through this process, the maximum distance between embeddings within a cluster can be effectively controlled to a small constant $\kappa$. 
Regarding the clusters as the context groups, the distance of regression function $\delta \leq \zeta \cdot \kappa+\eta$ is also small.
Then we have the following proposition to find such a transformation function.
\begin{proposition}\label{prop1}
If we can find a predictive function $f$ and transformation function $\xi$ with Lipschitz constraints on contexts such that $|h\left(\boldsymbol{x}^u, \boldsymbol{x}^c, t\right)-f\left(\boldsymbol{x}^u, \xi(\boldsymbol{x}^c), t\right)| \leq \mu, \forall \boldsymbol{x}^u,\boldsymbol{x}^c,t$ and $\left|f\left(\boldsymbol{x}^u, \xi(\boldsymbol{x}_i^c), t\right)\right.$ $\left.- f\left(\boldsymbol{x}^u, \xi(\boldsymbol{x}_j^c), t\right)\right| \leq c \cdot\left\|\xi(\boldsymbol{x}_i^c)-\xi(\boldsymbol{x}_j^c)\right\|_2, \forall \boldsymbol{x}^u,\boldsymbol{x}^c_i, \boldsymbol{x}^c_j,t,i\neq j$. Then the function $\xi$ satisfies the Assumption~\ref{assm2} with $\zeta=c, \eta=2 \mu$.    
\end{proposition}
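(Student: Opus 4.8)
The plan is to establish the bound in Assumption~\ref{assm2} directly from the two hypotheses of the proposition via a single application of the triangle inequality, using the predictive function $f$ as an intermediate bridge between the two evaluations of the true regression function $h$. First I would fix arbitrary $\boldsymbol{x}^u$, $t$, and a context pair $(\boldsymbol{x}^c_i, \boldsymbol{x}^c_j)$ with $i \neq j$, and for brevity abbreviate the four relevant quantities as $h_i = h(\boldsymbol{x}^u, \boldsymbol{x}^c_i, t)$, $h_j = h(\boldsymbol{x}^u, \boldsymbol{x}^c_j, t)$, $f_i = f(\boldsymbol{x}^u, \xi(\boldsymbol{x}^c_i), t)$, and $f_j = f(\boldsymbol{x}^u, \xi(\boldsymbol{x}^c_j), t)$.

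The key step is the telescoping decomposition $h_i - h_j = (h_i - f_i) + (f_i - f_j) + (f_j - h_j)$, to which I apply the triangle inequality to obtain $|h_i - h_j| \leq |h_i - f_i| + |f_i - f_j| + |f_j - h_j|$. The first and third terms are each bounded by the approximation hypothesis, giving $|h_i - f_i| \leq \mu$ and $|f_j - h_j| \leq \mu$, while the middle term is bounded by the Lipschitz hypothesis, giving $|f_i - f_j| \leq c \cdot \|\xi(\boldsymbol{x}^c_i) - \xi(\boldsymbol{x}^c_j)\|_2$. Summing the three contributions yields $|h_i - h_j| \leq c \cdot \|\xi(\boldsymbol{x}^c_i) - \xi(\boldsymbol{x}^c_j)\|_2 + 2\mu$, which is precisely the inequality asserted in Assumption~\ref{assm2} with the identifications $\zeta = c$ and $\eta = 2\mu$.

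Because $\boldsymbol{x}^u$, $t$, and the context pair were arbitrary, the bound holds for all such choices, which completes the argument. I do not anticipate any genuine obstacle: the proof reduces to a one-line triangle inequality once $f_i$ and $f_j$ are inserted as the correct intermediate terms, and the only point requiring care is applying the approximation hypothesis at the matching context arguments so that the two error terms accumulate to $\eta = 2\mu$. In effect the proposition is a bookkeeping result showing that an embedding $\xi$ which makes the learnable surrogate $f$ Lipschitz, together with a uniformly small approximation gap $\mu$ between $f$ and $h$, transfers that Lipschitz-plus-error structure back onto the true regression function $h$, thereby justifying clustering in the embedded space $\xi(\boldsymbol{x}^c)$.
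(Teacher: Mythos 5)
Your proposal is correct and follows exactly the same argument as the paper: insert $f(\boldsymbol{x}^u,\xi(\boldsymbol{x}^c_i),t)$ and $f(\boldsymbol{x}^u,\xi(\boldsymbol{x}^c_j),t)$ as intermediate terms, apply the triangle inequality, bound the two approximation gaps by $\mu$ each and the middle term by $c\cdot\|\xi(\boldsymbol{x}^c_i)-\xi(\boldsymbol{x}^c_j)\|_2$, giving $\zeta=c$ and $\eta=2\mu$. No differences worth noting.
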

\noindent
We provide the proof in Appendix~\ref{app:proof}.

Inspired by the Proposition~\ref{prop1}, to train the context embedding to be clustered, we aim to learn the transformation $\xi$ and a Lipschitz regularized regressor $f$ with small predictive error (\textit{i.e.}, small $\mu$). 

\subsubsection{Procedure}
We use the categorical and numerical embeddings~\cite{gorishniy2022embeddings} for the context features, and we construct the regression model as $f$ in Proposition~\ref{prop1} which predicts the response based on the user features, context features, and treatment with Lipschitz regularization.

\noindent
\textbf{Response Prediction.} We denote $\xi_\theta(\boldsymbol{x}^c)$ as the concatenation of the categorical and numerical embeddings~\cite{gorishniy2022embeddings} for context feature embeddings, $\theta$ is the learnable parameters. 
To ensure that the acquired embeddings $\xi_\theta\left(x^c\right)$ accurately capture the context effects on the responses and the distance between them, we concatenate the representation $\xi_\theta\left(x^c\right)$ with the user features and the treatment. This concatenation is then fed into the regression model $f$ to make predictions of the response.
The prediction loss for the response-guided regression is defined as:
\begin{equation}
\mathcal{L}_{\text{pred}}=\mathcal{L} \left(f\left(\boldsymbol{x}^u, \xi_{\theta}(\boldsymbol{x}^c), t\right),y\right),  
\label{eq:pred_loss}
\end{equation}
where $\mathcal{L}(\cdot, \cdot)$ is the mean square error loss. 

\noindent
\textbf{Lipschitz Regularization.} 
To ensure a consistent representation of the relationship of the contexts in the embedding space, we employ Lipschitz regularization. This regularization guarantees that the function $f$ is Lipschitz continuous, meaning that the distance in the embedding space accurately reflects the context effects. To satisfy Lipschitz continuity, there must exist a non-negative constant $c$ for  $\forall i, j, i\neq j$ such that:
\begin{equation}
\left|f\left(\boldsymbol{z}_i\right)-f\left(\boldsymbol{z}_j\right)\right| \leq c\left\|\boldsymbol{z}_i-\boldsymbol{z}_j\right\|_2,   
\end{equation}
where $\boldsymbol{z}_i=\left(\boldsymbol{x}_i^u, \xi(\boldsymbol{x}_i^c), t_i\right)$ is the concatenation of the user features, the context embedding and the treatment, and intuitively, the change of response is bounded by constant $c$ for smoothness. 
It is evident that if the function $f$ is $c$-Lipschitz on the input $\boldsymbol{z}$, then it is also $c$-Lipschitz on the context embedding $\xi(\boldsymbol{x}^c)$.

We utilize Lipschitz Regularization~\cite{liu2022learning}, a technique that relies solely on the weight matrices of each network layer, with an estimated per-layer Lipschitz upper bound denoted as $c_i$. The purpose of this regularization is to constrain the discrepancies between context effects and responses. The corresponding loss function can be defined as follows:
\begin{equation}
\mathcal{L}_{\text {Lip }}=\prod_{i=1}^l \operatorname{softplus}\left(c_i\right),  
\label{eq:lips}
\end{equation}
where $\operatorname{softplus}\left(c_i\right)=\ln \left(1+e^{c_i}\right)$ is a reparameterization strategy to avoid invalid negative estimation on Lipschitz constant, $c_i$ and $l$ is the number of network layers.

Then we can obtain the final loss function for training the response-guided context embedding:
\begin{equation}
 \mathcal{L}_{\text{reg}}=\mathcal{L}_{\text{pred}}+\alpha \mathcal{L}_{\text {lip }}. \label{eq:reg_loss}
\end{equation}
where $\alpha$ is the hyper-parameter to control the trade-off of aligning the difference between the context embedding and the response. 
Following the previous work~\cite{liu2022learning}, the value of $\alpha$ is stated to be minimal, thus, we set it to be $10^{-4}$ in all experiments. 

\noindent
\textbf{Grouping and Aggregation.}
After we get the trained context embedding, grouping it is important for the data dimensionality reduction and volatility minimization. 
We determine the number of groups by setting a hyperparameter $K$, and utilize clustering algorithms like K-means to allocate each context embedding $\xi_\theta(\boldsymbol{x}^c)$ to a specific group $\boldsymbol{g}$.
The clustering mapping $\mathcal{F}$, which assigns embeddings to groups, is formalized as: $\mathcal{F}: \xi_\theta(\boldsymbol{x}^c) \mapsto \boldsymbol{g}$. 

Moreover, to enhance the stability and accuracy of uplift model training and prediction
aggregating data with the same features is necessary.
Specially, for any two samples $(\boldsymbol{x}^u_i, \boldsymbol{g}_i, t_i, y_i)$ and $(\boldsymbol{x}^u_j, \boldsymbol{g}_j,$ $t_j, y_j)$, $i\neq j$, if they have the same treatment, user features and context group, we aggregate them into a new sample as:
\begin{equation}
    \overline{y} = \frac{y_i + y_j}{2},\quad \forall \boldsymbol{x}^u_i = \boldsymbol{x}^u_j, \boldsymbol{g}_i=\boldsymbol{g}_j, t_i= t_j. \label{eq:group_data}
\end{equation}
Then, we can get the relabeled dataset $\mathcal{D}_r=\{\boldsymbol{x}^u_i, \boldsymbol{g}_i, t_i, \overline{y}\}_{i=1}^m$ with $m$ samples for subsequent uplift prediction.

\subsection{Feature Interaction}
The second problem we need to solve is the feature interaction in uplift prediction.
Feature interaction is designed to model combinations between different features and have been shown to significantly improve the performance of a response model~\cite{calder2003feature}.
To obtain better uplift prediction, we mainly focus on the user-context interaction and treatment-feature interaction, we detailed introduce them in the following.

\subsubsection{User-Context Interaction}

In real time marketing, when considering the context (\textit{i.e.}, short videos), the context information can influence the user behavior. 
Recent state-of-the-art methods~\cite{liu2023explicit,gan2023making} represent the relationships between users and contexts using a tensor. However, utilizing a tensor makes it challenging to differentiate the effects of various context factors and to model the non-linear interactions between users and contexts.
In this part, we use the user feature embedding $\boldsymbol{e}_u$ and the grouped context embedding $\boldsymbol{e}_c$ to model the user response.
We adapt a parallel co-attention network~\cite{lu2016hierarchical} for the user-context interaction to get better prediction results, the first step is to compute the affinity matrix:
\begin{equation}
\boldsymbol{L} = \mathrm{tanh}\left(\boldsymbol{e}_u^{T} \cdot \boldsymbol{W}_L \cdot\boldsymbol{e}_c\right), 
\end{equation}
where $\boldsymbol{W}_L$ is the weight parameters. After computing this affinity matrix, 
the parallel co-attention network considers this affinity matrix as a feature and learns to predict user and context attention maps via the following:
\begin{equation}
\begin{aligned}
  \boldsymbol{H}_{u}=&\mathrm{tanh}\left(\boldsymbol{W}_u \cdot \boldsymbol{e}_u + \left(\boldsymbol{W}_c \cdot \boldsymbol{e}_c \right)\cdot \boldsymbol{L}\right),\\
  \boldsymbol{H}_{c}=& \mathrm{tanh}\left(\boldsymbol{W}_c \cdot \boldsymbol{e}_c + \left(\boldsymbol{W}_u \cdot \boldsymbol{e}_u \right) \cdot \boldsymbol{L}^T\right).
\end{aligned}\label{eq:att_matrix}
\end{equation}
where $\boldsymbol{W}_u$ and $\boldsymbol{W}_c$ are the weight parameters. The affinity matrix $\boldsymbol{L}$ transform the context attention space to the user attention space (vice versa for $\boldsymbol{L}^T$). Next, we can get the normalized attention weights for the user and context embeddings:
\begin{equation}
    \boldsymbol{a}_u = \mathrm{softmax}\left(\boldsymbol{W}^T_{hu}\cdot \boldsymbol{H}_{u}\right),~~
    \boldsymbol{a}_c = \mathrm{softmax}\left(\boldsymbol{W}^T_{hc}\cdot \boldsymbol{H}_{c}\right),
\end{equation}
where $\boldsymbol{W}_{hu}$ and $\boldsymbol{W}_{hc}$ are the weight parameters, $\boldsymbol{a}_u$ and $\boldsymbol{a}_c$ are the attention parameters for the user and context, respectively.
Based on the above attention weights, the user and context attention vectors are calculated in the following:
\begin{equation}
    \hat{\boldsymbol{e}}_u = \boldsymbol{a}_u * \boldsymbol{e}_u,
    \quad
    \hat{\boldsymbol{e}}_c = \boldsymbol{a}_c * \boldsymbol{e}_c, \label{eq:atten_vec}
\end{equation}
where $*$ denotes the sum of the product for each element in the embedding with the attention vector. Then, we use a multilayer perceptron (MLP) to predict control responses:
\begin{equation}
\mu_0=\mathrm{MLP}(\hat{\boldsymbol{e}}_u, \hat{\boldsymbol{e}}_c).\label{eq:pred_control}
\end{equation}
\subsubsection{Treatment-Feature Interaction}
Considering the treatment assignment sensitive features in the concatenation of the user features and context features, we introduce a cross attention based treatment-feature interaction part to discover the information gain~\cite{tao2023event} introduced by the treatment assignment.

Specially, we denote the concatenation of $\hat{\boldsymbol{e}}_u$ and $\hat{\boldsymbol{e}}_c$ as $\hat{\boldsymbol{e}}_f$, the embedding of treatment as $\boldsymbol{e}_t$. The we can get the attention weights of the feature embedding as follows:
\begin{equation}
    \boldsymbol{a}_t = \mathrm{softmax}\left( \frac{\left(\boldsymbol{W}_t\cdot \boldsymbol{e}_t\right)\cdot \left(\boldsymbol{W}_f\cdot \hat{\boldsymbol{e}}_f\right)^T}{\sqrt{K_d}} \right),
\end{equation}
where $\boldsymbol{W}_t$ and $\boldsymbol{W}_f$ are the weight parameters, $K_d$ is the dimension of the output embedding.
Then, to simulate the treatment variation, we can obtain the $\boldsymbol{a}_t^0$ and $\boldsymbol{a}_t^1$ with different treatment embeddings $\boldsymbol{e}^0_t$ (\textit{i.e.}, $t=0$)  and $\boldsymbol{e}^1_t$ (\textit{i.e.}, $t=1$), respectively. The information gain on the feature embedding introduced by the treatment assignment can be formulated as:
\begin{equation}
\hat{\boldsymbol{e}}_{\Delta} = \boldsymbol{a}_t^1 * \hat{\boldsymbol{e}}_f- \boldsymbol{a}_t^0 * \hat{\boldsymbol{e}}_f,
\label{eq:infor_gain}\end{equation}
to better optimize the cross attention structure to find treatment the sensitive features, we aim to maximize the information gain $\hat{\boldsymbol{e}}_{\Delta}$ in the final loss function.

Similar to the control response prediction, we can also use a multilayer perceptron to predict the uplift of each sample with and without the information gain, respectively, as defined below:
\begin{equation}
\hat{\tau} = MLP(\hat{\boldsymbol{e}}_{f}), \quad \tilde{\tau} = MLP(\hat{\boldsymbol{e}}_{\Delta} + \hat{\boldsymbol{e}}_{f}). \label{eq:uplift_pred}
\end{equation}

Thus, we can get the information gain coefficient through
a difference function, which acts as the importance of the
sample in the uplift prediction loss function, and is defined as follows:
\begin{equation}
\begin{aligned}
 w_{b a t c h}&=\frac{\exp \left(\tilde{\tau}-\hat{\tau}\right)}{\sum_{b a t c h} \exp \left(\tilde{\tau}-\hat{\tau}\right)}, \\
\mathcal{L}_{\text {uplift }}&= w_{b a t c h} \cdot \left(\left(1-t\right)\cdot \mathcal{L}(\mu_0, \overline{y})\right.\\&\left.\qquad + t \cdot\left(\mathcal{L}(\mu_1, \overline{y}) +\beta\cdot \mathcal{L}(\tilde{\mu}_1, \overline{y}) \right) \right)- \gamma\cdot \|\hat{\boldsymbol{e}}_{\Delta}\|_F^2.
\end{aligned}\label{eq:uplift_loss}
\end{equation}
where $\mu_1 =\mu_0 + \hat{\tau}$ is the predicted treatment response, $\tilde{\mu}_1 =\mu_0 + \tilde{\tau}$ is the predicted treatment response with treatment-feature interaction, $w_{b a t c h}$ is the sample weight calculated in the batch, $\beta$ and $\gamma$ are the hyperparmeters to control the trade-off.

\section{Experiments}
In this section, we conduct extensive experiments on a synthetic dataset and a product dataset to answer the following research questions:
\begin{itemize}
    \item \textbf{RQ1}: Can our UMLC outperform different baselines on various commonly used uplift modeling metrics?
    \item \textbf{RQ2}: How does each proposed module contributes to the performance of our UMLC?
    \item \textbf{RQ3}: How does the choice of group number influences the performance of our UMLC?
\end{itemize}
And we also conduct more detailed experiments and analysis in Appendix~\ref{app:exp}.

\subsection{Experimental Setups}

\subsubsection{Datasets}
\textbf{Synthetic dataset.}
To simulate complex real-world scenarios, the synthetic dataset used in our incorporates several key factors. 
1) the generation of user features and context features is independent of the treatment to simulate the real-world RCTs. 
2) For each user, a random selection of 60-130 contexts is made from the context pool, and these are combined to generate the samples, which ensures that each user is connected with multiple contexts in a random manner. 
3) The distributions of user responses in both the treatment and control groups exhibit long-tailed behavior. In light of Assumption~\ref{assum1}, we divide the contexts into six groups based on their influence on the response. 
4) Similar to the real-world data, the features include three types of distribution features: binary, categorical, and numerical. For each type, we use Bernoulli, Multinomial, and Gaussian distributions, respectively.
Moreover, we present the detailed generation process of the Synthetic dataset in Appendix~\ref{app:syn}.

\noindent
\textbf{Production dataset.} This dataset comes from an industrial production environment, one of the largest short-video platforms in China. For such short video platforms, clarity is an important user experience indicator. A decrease in clarity may lead to a decrease in users' playback time. Therefore, through random experiments within a week, we provided high-clarity videos ($t=1$) to the treatment group and low-clarity videos ($t=0$) to the control group. We count the total viewing time of users' short videos in a week and quantify the impact of definition degradation on user experience. 
The visualization of this dataset is shown in Figure~\ref{fig:kuai}. 
For the online random experiments, we control the distributions of the users in two groups to be similar and collect the data that each user with all the short videos played. 
Moreover, the statistics of the two datasets are presented in Appendix~\ref{app:syn}.

\begin{figure}[!t]
    \centering
    \includegraphics[width=0.9\linewidth]{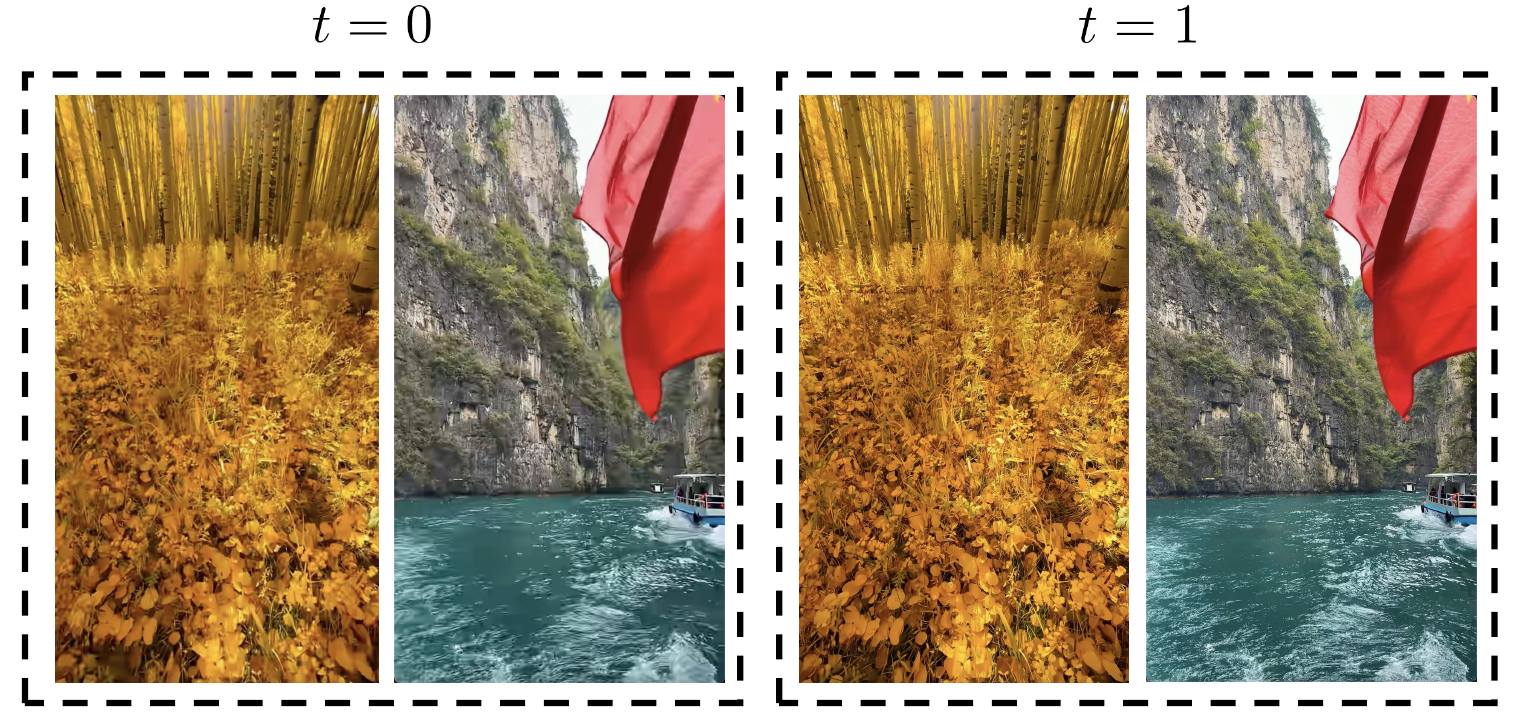}
    \caption{The visualization of the Production dataset. As $t$ increases, the clarity of the video correspondingly enhances.}
    \label{fig:kuai}
    \vspace{-0.6cm}
\end{figure}
\begin{table*}[!t]
    \centering
    \caption{Overall comparison between our UMLC and the baselines on Synthetic and Production datasets. We report the results over five random seeds. The best results and second best results are \textbf{bold} and \underline{underlined}, respectively.}\label{tab:overall}
    \begin{tabular}{c|ccccccc}\toprule
    \multirow{2}{*}{\textbf{Method}} & \multicolumn{3}{c}{Synthetic Dataset} & &\multicolumn{3}{c}{Production Dataset}  \\\cline{2-4} \cline{6-8}
    &AUUC & QINI & KENDALL & & AUUC & QINI & KENDALL \\\midrule
    S-Learner & 0.2104 $\pm$ 0.0292& 0.2502 $\pm$ 0.0116& 0.1186 $\pm$ 0.0118& & 1.7907 $\pm$ 0.0529 &2.3156 $\pm$ 0.0493 & 0.2631 $\pm$ 0.0252 \\
    T-Learner & 0.2250 $\pm$ 0.0201& \underline{0.2697} $\pm$ 0.0178& 0.1185 $\pm$ 0.0110& & 1.8007 $\pm$ 0.0634 &2.3426 $\pm$ 0.0504 & 0.3152 $\pm$ 0.0209 \\
    TARNet & 0.2131 $\pm$ 0.0136& 0.2379 $\pm$ 0.0224& 0.1368 $\pm$ 0.0122& & 1.7754 $\pm$ 0.0575 &2.2764 $\pm$ 0.0473 & 0.3263 $\pm$ 0.0286 \\
    CFRNet-mmd & 0.2077 $\pm$ 0.0237& 0.2120 $\pm$ 0.0257& 0.1689  $\pm$ 0.0134& & 1.6569 $\pm$ 0.0594 &2.1614 $\pm$ 0.0528 & 0.3368 $\pm$ 0.0227 \\
    CFRNet-wass & 0.2139 $\pm$ 0.0298& 0.2234 $\pm$ 0.0309& 0.1594 $\pm$ 0.0166& & 1.7286 $\pm$ 0.0646 &2.1093 $\pm$ 0.0602 & 0.3536 $\pm$ 0.0346 \\

    DragonNet & 0.2574 $\pm$ 0.0365& 0.2153 $\pm$ 0.0279& \underline{0.1754} $\pm$ 0.0138& & 1.3581 $\pm$ 0.0613 &1.8750 $\pm$ 0.0522 & \textbf{0.3894} $\pm$ 0.0258 \\
    EUEN & 0.2158 $\pm$ 0.0202& 0.2170 $\pm$ 0.0195& 0.1126 $\pm$ 0.0087& & 1.2506 $\pm$ 0.0505 &1.7092 $\pm$ 0.0513 & 0.3157 $\pm$ 0.0220 \\
    UniTE & 0.2091 $\pm$ 0.0226& 0.2071 $\pm$ 0.0234& 0.1284 $\pm$ 0.0126& & 1.3092 $\pm$ 0.0499 &1.8075 $\pm$ 0.0534 & \underline{0.3684} $\pm$ 0.0275 \\\midrule
 \rowcolor{mygray}    \textbf{UMLC} (CFRNet-mmd) & \textbf{0.3149} $\pm$ 0.0209& 0.2558 $\pm$ 0.0191& 0.1647 $\pm$ 0.0152& & \textbf{2.2106} $\pm$ 0.0595 & \textbf{2.6105} $\pm$ 0.0545 & 0.3473 $\pm$ 0.0246 \\
 \rowcolor{mygray}    \textbf{UMLC} (CFRNet-wass) & \underline{0.2871} $\pm$ 0.0235& 0.2499 $\pm$ 0.0204& 0.1584 $\pm$ 0.0112& & 2.0329 $\pm$ 0.0565 & \underline{2.5201} $\pm$ 0.0453 & 0.3210 $\pm$ 0.0277 \\
 \rowcolor{mygray}   \textbf{UMLC} (DragonNet) & 0.2549 $\pm$ 0.0208& \textbf{0.2961} $\pm$ 0.0216& \textbf{0.1894} $\pm$ 0.0165& & 1.9019 $\pm$ 0.0556 &1.8514 $\pm$ 0.0463 & 0.3368 $\pm$ 0.0265 \\
 \rowcolor{mygray}   \textbf{UMLC} (EUEN) & 0.2406 $\pm$ 0.0245& 0.2594 $\pm$ 0.0268&  0.1473 $\pm$ 0.0124& & 1.9539 $\pm$ 0.0555 &2.3680 $\pm$ 0.0469 & 0.2631 $\pm$ 0.0251 \\
 \rowcolor{mygray}      \textbf{UMLC} (UniTE) & 0.2533 $\pm$ 0.0269& 0.2467 $\pm$ 0.0221& 0.1594 $\pm$ 0.0166& & \underline{2.0505} $\pm$ 0.0594 &2.4416 $\pm$ 0.0569 & 0.2947 $\pm$ 0.0271 \\\midrule
    \end{tabular}
\end{table*}
\begin{table*}[!t]
    \centering
    \caption{Ablation study of our UMLC with four base uplift models on Synthetic and Production datasets. We report the results over five random seeds. The best results and second best results are \textbf{bold} and \underline{underlined}, respectively.}    \label{tab:ablation}
    \resizebox{\linewidth}{!}{
    \begin{tabular}{c|ccccccc}\toprule
    \multirow{2}{*}{\textbf{Method}} & \multicolumn{3}{c}{Synthetic Dataset} & &\multicolumn{3}{c}{Production Dataset}  \\\cline{2-4} \cline{6-8}
    &AUUC & QINI & KENDALL & & AUUC & QINI & KENDALL \\\midrule
 \rowcolor{mygray}         \textbf{UMLC} (CFRNet-mmd) & \textbf{0.3149} $\pm$ 0.0209& \textbf{0.2558} $\pm$ 0.0191& \underline{0.1647} $\pm$ 0.0152& & \textbf{2.2106} $\pm$ 0.0595 & \textbf{2.6105} $\pm$ 0.0545 & \underline{0.3473} $\pm$ 0.0246   \\
    \textsc{w/o RCG} & 0.2321 $\pm$ 0.0201& 0.2230 $\pm$ 0.0208&  \textbf{0.1694} $\pm$ 0.0162& & 1.9073 $\pm$ 0.0549 &2.4665 $\pm$ 0.0435 & \textbf{0.3578} $\pm$ 0.0249 \\
    \textsc{w/o UCI} & \underline{0.2395} $\pm$ 0.0221& 0.2268 $\pm$ 0.0164& 0.1547 $\pm$ 0.0149& & 1.8578 $\pm$ 0.0584 &1.8255 $\pm$ 0.0470 & 0.3084 $\pm$ 0.0251 \\
   \textsc{w/o TFI} & 0.2144 $\pm$ 0.0213& \underline{0.2482} $\pm$ 0.0144& 0.1315 $\pm$ 0.0159& & \underline{2.1349} $\pm$ 0.0557 & \underline{2.4983} $\pm$ 0.0539 & 0.3263 $\pm$ 0.0248 \\
   \midrule
\rowcolor{mygray}         \textbf{UMLC} (CFRNet-wass) & \textbf{0.2871} $\pm$ 0.0235& \textbf{0.2499} $\pm$ 0.0204& \underline{0.1584} $\pm$ 0.0112& & \underline{2.0329} $\pm$ 0.0565 &\textbf{2.5201} $\pm$ 0.0453 & \underline{0.3210} $\pm$ 0.0277 \\
    \textsc{w/o RCG} & 0.2221 $\pm$ 0.0209& 0.2279 $\pm$ 0.0196& \textbf{0.1601} $\pm$ 0.0123& & 1.9214 $\pm$ 0.0568 &2.3612 $\pm$ 0.0519 & \textbf{0.3597} $\pm$ 0.0244 \\
    \textsc{w/o UCI} & \underline{0.2355} $\pm$ 0.0213& 0.2026 $\pm$ 0.0219& 0.1328 $\pm$ 0.0179& & 1.9307 $\pm$ 0.0583 &1.9992 $\pm$ 0.0478 & 0.3122 $\pm$ 0.0257 \\
   \textsc{w/o TFI} & 0.2062 $\pm$ 0.0204& \underline{0.2368} $\pm$ 0.0205& 0.1473 $\pm$ 0.0119& & \textbf{2.0519} $\pm$ 0.0549 & \underline{2.4805} $\pm$ 0.0543 & 0.3137 $\pm$ 0.0248 \\
   \midrule
\rowcolor{mygray}         \textbf{UMLC} (DragonNet) & 0.2549 $\pm$ 0.0208& \textbf{0.2961}  $\pm$ 0.0216& \textbf{0.1894}  $\pm$ 0.0165& & \underline{1.9019} $\pm$ 0.0556 & \textbf{1.8514} $\pm$ 0.0463 & 0.3368 $\pm$ 0.0265\\
    \textsc{w/o RCG} &  \textbf{0.3094} $\pm$ 0.0239& 0.2019 $\pm$ 0.0222& 0.1484 $\pm$ 0.0117& & 1.7542 $\pm$ 0.0574 & \underline{1.8481} $\pm$ 0.0552 & \textbf{0.4210} $\pm$ 0.0252 \\
    \textsc{w/o UCI} & 0.2634 $\pm$ 0.0212& 0.2064 $\pm$ 0.0258& 0.1557 $\pm$ 0.0174& & 1.8401 $\pm$ 0.0572 &1.8370 $\pm$ 0.0487 & \underline{0.3398} $\pm$ 0.0251 \\
   \textsc{w/o TFI} & \underline{0.2639} $\pm$ 0.0244& \underline{0.2365} $\pm$ 0.0205& \underline{0.1778} $\pm$ 0.0149& & \textbf{1.9050} $\pm$ 0.0583 &1.8405 $\pm$ 0.0538 & 0.2778 $\pm$ 0.0277 \\
   \midrule
\rowcolor{mygray}         \textbf{UMLC} (EUEN) &  \textbf{0.2406} $\pm$ 0.0245& \textbf{0.2594} $\pm$ 0.0268& \underline{0.1473} $\pm$ 0.0124& & \textbf{1.9539} $\pm$ 0.0555 & \textbf{2.3680} $\pm$ 0.0469 & 0.2631 $\pm$ 0.0251 \\
    \textsc{w/o RCG} & \underline{0.2295} $\pm$ 0.0251& \underline{0.2404} $\pm$ 0.0226& 0.1263 $\pm$ 0.0121& & \underline{1.5694} $\pm$ 0.0586 &2.0819 $\pm$ 0.0555 & \underline{0.3789} $\pm$ 0.0257 \\
    \textsc{w/o UCI} & 0.2170 $\pm$ 0.0215& 0.1783 $\pm$ 0.0251& \textbf{0.1491} $\pm$ 0.0143& & 1.3145 $\pm$ 0.0568 &1.8223 $\pm$ 0.0538 &  \textbf{0.3992} $\pm$ 0.0265 \\
   \textsc{w/o TFI} & 0.2119 $\pm$ 0.0235& 0.2121 $\pm$ 0.0223& 0.1157 $\pm$ 0.0135& & 1.4179 $\pm$ 0.0579 & \underline{2.2572} $\pm$ 0.0492 & 0.2315 $\pm$ 0.0266 \\
   \midrule   
\rowcolor{mygray}         \textbf{UMLC} (UniTE) & \textbf{0.2533} $\pm$ 0.0269& \textbf{0.2467} $\pm$ 0.0221& \underline{0.1594} $\pm$ 0.0166& & \textbf{2.0505} $\pm$ 0.0594 & \textbf{2.4416} $\pm$ 0.0569 & 0.2947 $\pm$ 0.0271  \\
    \textsc{w/o RCG} & 0.2118 $\pm$ 0.0217& 0.2083 $\pm$ 0.0244& 0.1401 $\pm$ 0.0135& & 1.6776 $\pm$ 0.0576 &2.1548 $\pm$ 0.0526 & \underline{0.3421} $\pm$ 0.0277 \\
    \textsc{w/o UCI} & 0.2087 $\pm$ 0.0295& \underline{0.2204} $\pm$ 0.0226& 0.1389 $\pm$ 0.0164& & 1.7011 $\pm$ 0.0574 & \underline{2.4001} $\pm$ 0.0554 & 0.3052 $\pm$ 0.0289 \\
   \textsc{w/o TFI} & \underline{0.2422} $\pm$ 0.0263& 0.2078 $\pm$ 0.0204& \textbf{0.1631} $\pm$ 0.0173& & \underline{1.8688} $\pm$ 0.0586 &2.3730 $\pm$ 0.0542 & \textbf{0.3947} $\pm$ 0.0255 \\
   \bottomrule
   \end{tabular}}
       \vspace{-0.3cm}
\end{table*}

\subsubsection{Baselines and Metrics}
\textit{Baselines.} We compare our UMLC with \textbf{S-Learner}~\cite{kunzel2019metalearners}, \textbf{T-Learner}~\cite{kunzel2019metalearners}, \textbf{TARNet}~\cite{shalit2017estimating}, \textbf{CFRNet}~\cite{shalit2017estimating}, \textbf{DragonNet}~\cite{shi2019adapting}, \textbf{EUEN}~\cite{ke2021addressing} and \textbf{UniTE}~\cite{liu2023unite}. All of them are the representative methods in uplift modeling.

\noindent
\textit{Metrics.} Following the setup of previous work~\cite{belbahri2021qini}, we employ three evaluation metrics commonly used in uplift modeling, \textit{i.e.}, \textsc{AUUC (Area under Uplift Curve)}, \textsc{QINI (Qini Coefficient)} and \textsc{KENDALL (Kendall's Rank Correlation)}. 

\subsubsection{Implementation Details}
We implement all baselines and our UMLC based on Pytorch 1.10, with Adam as the optimizer and a maximum iteration count of 50. We use the QINI as a reference to search for the best hyper-parameters for all baselines and our model. We also adopt an early stopping mechanism with a patience of 5 to avoid over-fitting to the training set. Furthermore, we utilize the hyper-parameter search library Optuna~\cite{akiba2019optuna} to accelerate the tuning process, all experiments are implemented on NVIDIA A40 and Intel(R) Xeon(R) 5318Y Gold CPU @ 2.10GHz. 
The code of our UMLC framework is provided in \url{https://github.com/ZexuSun/UMLC}.
\begin{figure*}[!t]
    \centering
    \includegraphics[width=0.8\linewidth]{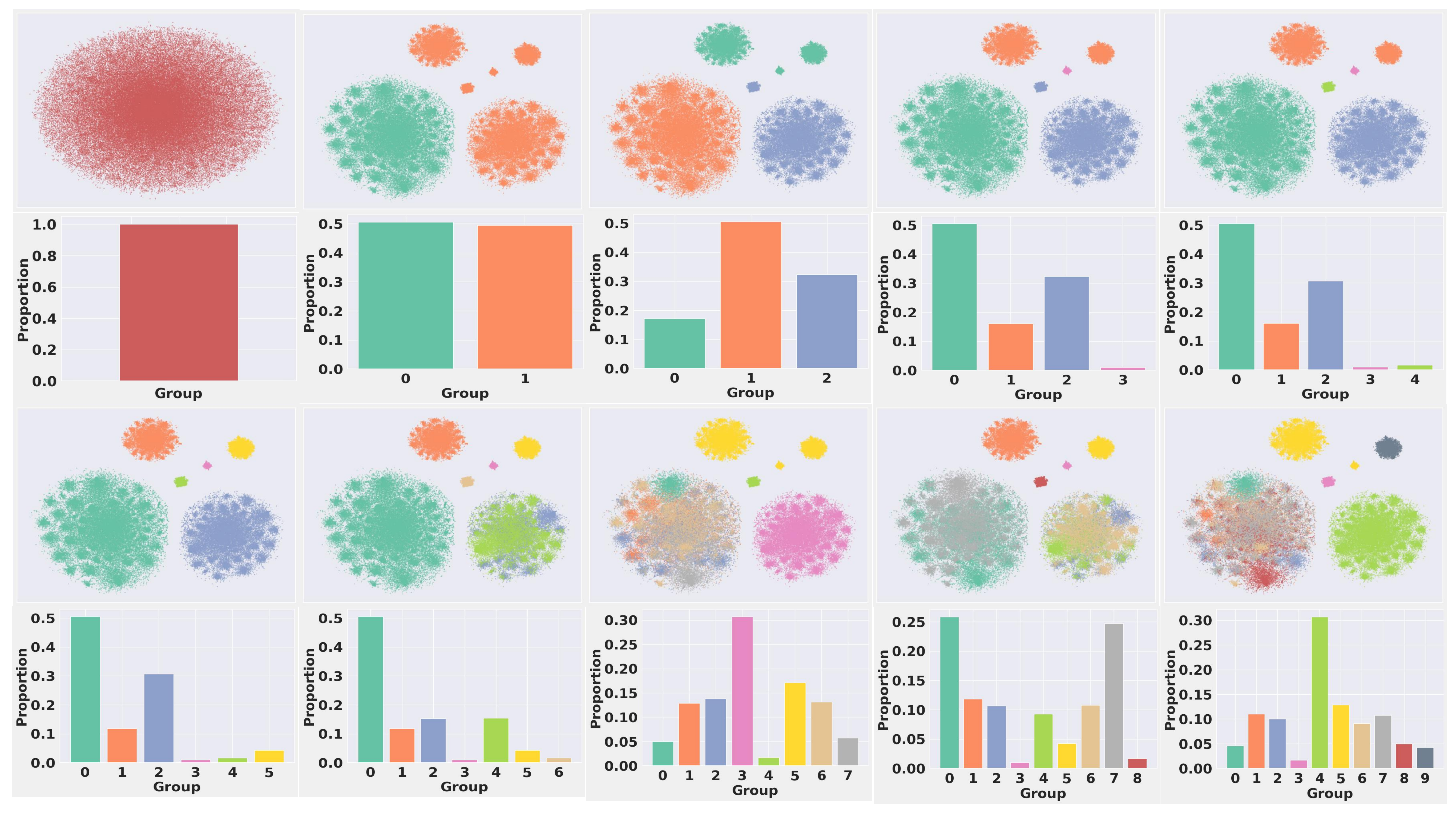}
    \vspace{-0.2cm}
    \caption{The context embedding t-SNE visualization of different group number $K$ (2-10) and the sample counts on the trained context embedding of Synthetic dataset. The first is the distribution of the original data. }
    \label{fig:cluster}
    \vspace{-0.2cm}
\end{figure*}
\subsection{Overall Performance (RQ1)}
To evaluate the effectiveness and compatibility of our UMLC, we conduct experiments on a synthetic dataset and a production dataset. The results are presented in Table~\ref{tab:overall}, and we have the following observations. 
1) S-Learner and T-Learner get competitive results on the two datasets among all the baseline methods. 
This indicates that the complex design of model structure (\textit{e.g.}, representation balancing, target regularization) may not be beneficial for the uplift prediction in real-time marketing. 
In particular, the massive data of this scenario suffers from the significant distribution shift among the two groups; without the specific model structure design, the uplift models may get even worse uplift prediction results than the S-Learner and T-Learner.  
2) CFRNet performs competitively among all the representation learning-based methods, but due to the representation balancing of Integral Probability Metrics (\textit{e.g.}, Maximum Mean Discrepancy, Wasserstein), the training time cost is high, especially the Wasserstein, which is optimized by the line search.
The results of DragonNet, EUEN, and UniTE may be worse than the meta-learner methods; this may be because complex model structure design may not be useful for the uplift task.
3) We combine our UMLC with four commonly used uplift modeling methods (\textit{i.e.}, CFRNet, DragonNet, EUEN, UniTE). 
We can find that integrating our UMLC on different uplift models can consistently achieve performance improvement, especially on QINI. 
This may be because we tune the model hyperparameters by using the QINI as the objective.
This demonstrates the effectiveness of our UMLC, where the carefully designed
two modules can effectively collaboratively reduce the distribution shift and perform feature interaction to improve the model’s performance.
After combining the four base uplift models with our UMLC, the performance is improved across diverse metrics on the two datasets. 
This suggests that our UMLC can serve as a general framework to improve the performance of uplift models in real-time marketing.

\subsection{Ablation Study (RQ2)}

In this section, we conduct the ablation studies of our UMLC and
analyze the role played by each module. 
We sequentially remove the components of the UMLC, \textit{i.e.}, the Response-guided Context Grouping (RCG), the User-Context Interaction (UCI) and the Treatment-Feature Interaction (TFI). 
We construct three variants of UMLC, which are denoted as \textsc{w/o RCG}, \textsc{w/o UCI}, and \textsc{w/o TFI}. 
We present the results in Table~\ref{tab:ablation}, and we can see
that removing any part may bring performance degradation.
This verifies the validity of each part designed in our UMLC. 
In particular, the response-guided context grouping module can narrow the contexts' feature space, reducing the distribution shift caused by the large-scale contexts.  
The user-context interaction part introduces the relationship between the user and context features, which can help the uplift model better predict the user response.
The treatment-feature interaction part can help the model discover the treatment assignment sensitive features, and the designed loss weight term can help the model adjust the sample importance, which can assign a bigger weight for the treatment assignment sensitive samples.
All the components are helpful in solving the real-time marketing problem
and improving prediction performance.
\begin{figure}[!t]
    \centering
    \subfigure[Groups matching results]{\includegraphics[width=0.48\linewidth]{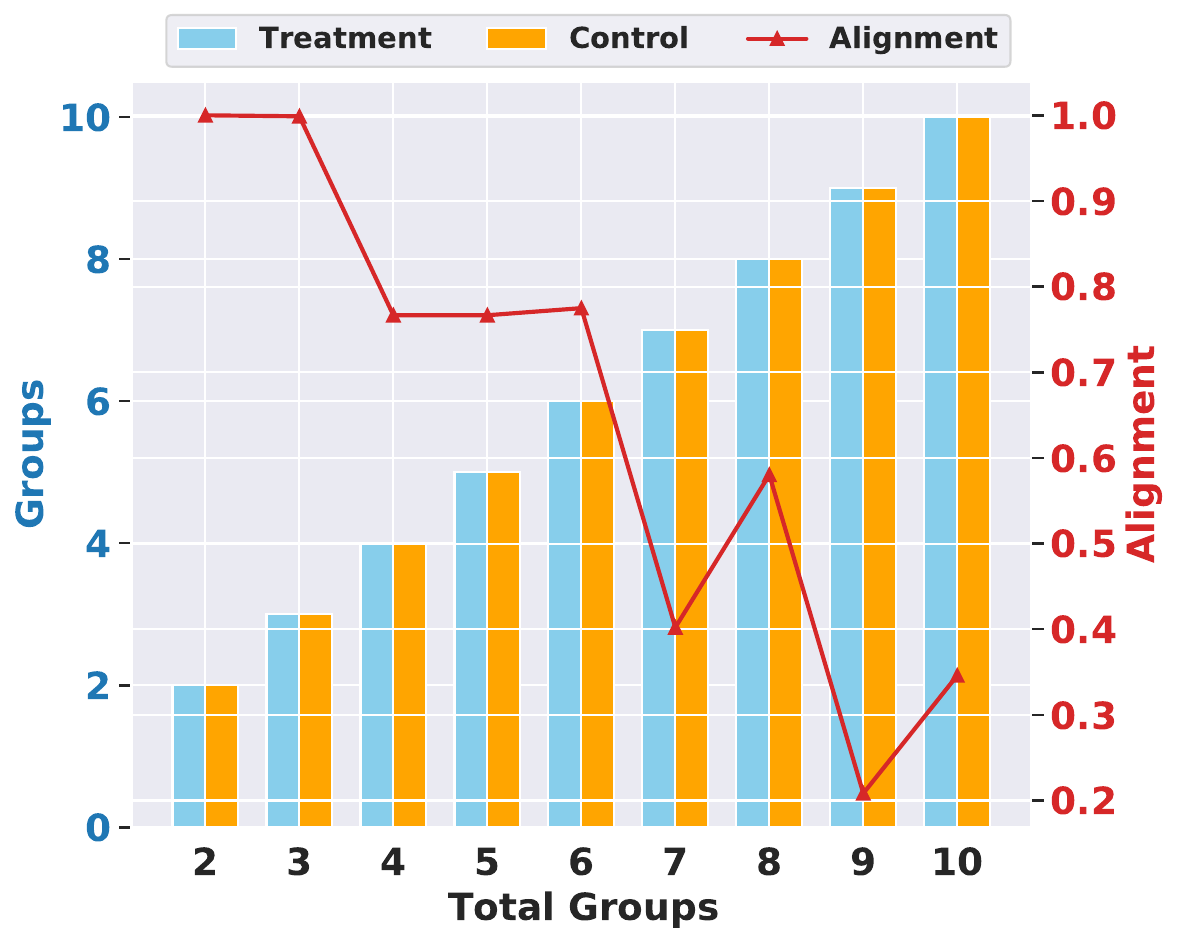}}\quad 
    \subfigure[Performance with different groups]{\includegraphics[width=0.48\linewidth]{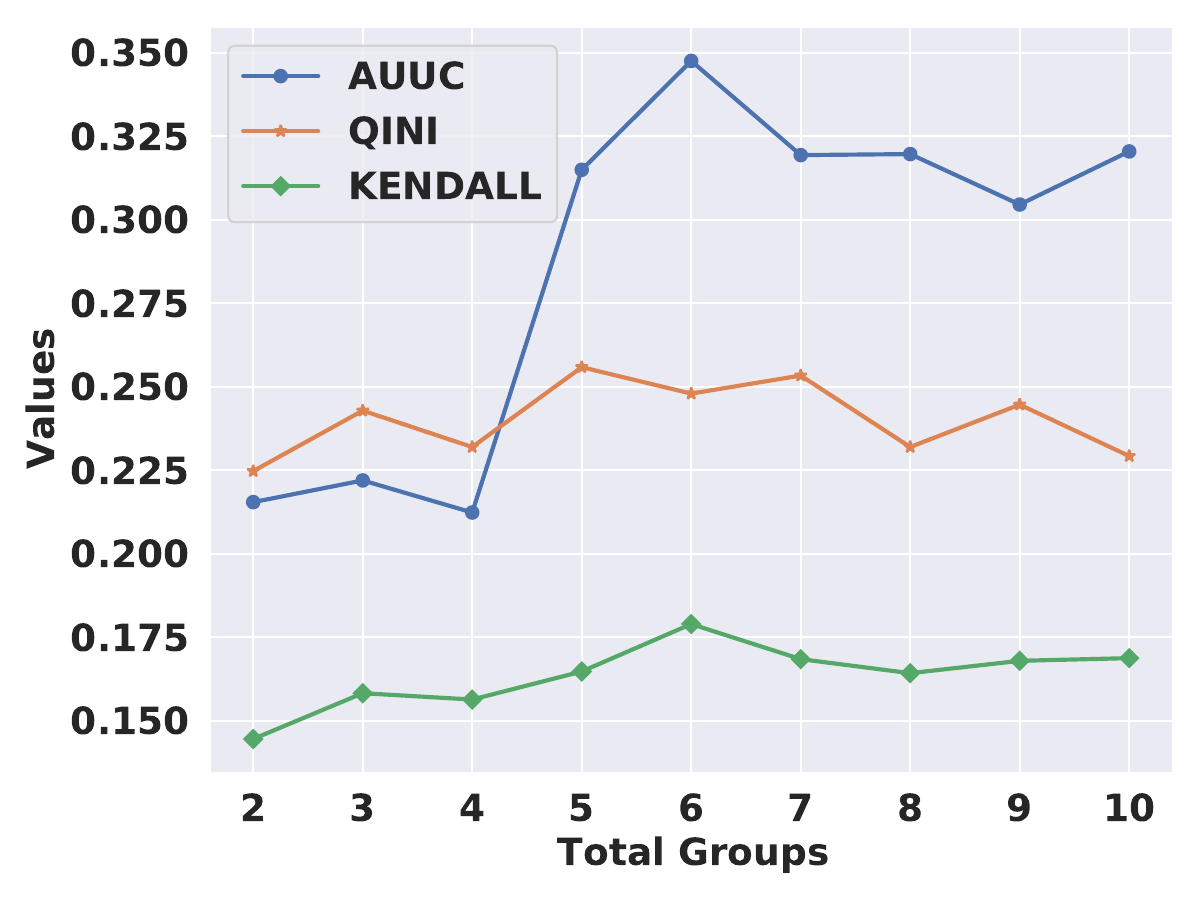}}
    \vspace{-0.5cm}
    \caption{The evaluation of different group number $K$ (2-10). We report the mean over five runs with different seeds.}
    \vspace{-0.5cm}
    \label{fig:clu_performance}
\end{figure}

\subsection{Context Grouping Analysis (RQ3)}

In this section, we detailed an analysis of the choice of $K$. 
We show the cluster results of the trained context embedding on the Synthetic dataset in Figure~\ref{fig:cluster}.
In the Synthetic dataset, we design six groups of contexts through the influence of the response, and from Figure~\ref{fig:cluster}, we can see that the trained context embedding can reflect the groups, even though the original data visualization has no evidence. 
The cluster results show that when we select $K$ as 6, we can get a good grouping result. 
To further illustrate the rationality of the group number choice, we conduct additional experiments in Figure~\ref{fig:clu_performance}.
We define a metric Alignment to evaluate the performance of grouping, \textit{i.e.}, 
$$
\text{Alignment} = \frac{Count\left(\mathcal{M}\left((\boldsymbol{x}^{u_t},\boldsymbol{g}^t), (\boldsymbol{x}^{u_c},\boldsymbol{g}^c) \right)\right)}{Count(\boldsymbol{x}^{u_t})},
$$
where $\boldsymbol{x}^{u_t}, \boldsymbol{g}^t$ and $\boldsymbol{x}^{u_c}, \boldsymbol{g}^c$ are the users and all the context groups in treatment and control groups, respectively. $\mathcal{M}(\cdot,\cdot)$ is to match the users based on the k-Nearest Neighbors (k-NN)~\cite{frolich2004finite}, then remain the users $\boldsymbol{x}^{u_t}$ where $\boldsymbol{g}^t=\boldsymbol{g}^c$, and the $Count(\cdot)$ represents the sample counter. 
Alignment represents the proportion of similar users in treatment and control groups that match the same context groups of all the users in the treatment group.

From the results, we can see that although the group number in treatment and control samples is consistent with the total group number, the alignment is decreased with $K$ increases and suffers a drastic decline between 6 and 7.
Specifically, we also show the model performance (\textit{i.e.}, UMLC (CFRNet-mmd)) across all the metrics with different group numbers. 
The results show that the model performance increases on most metrics from $K=2$ to 6, and also decreases on the $K=7$. 
This indicates that, although the group number is a hyperparameter, with a well trained response-guided context grouping module, we can target its value from the embedding space.

\section{Conclusion}
In this paper, to model the uplift in real-time marketing with massive data, we propose a novel model-agnostic Robust \underline{U}plift \underline{M}odeling with \underline{L}arge-Scale \underline{C}ontexts (UMLC) framework. 
UMLC consists of two customized modules. 1) The response-guided context grouping module, which trains the context embedding through the regression model, then clusters the embedding into groups. 
2) The feature interaction module, which contains two parts, \textit{i.e.}, the user-context interaction and the treatment-feature interaction. 
We conduct extensive evaluations to validate the effectiveness of our UMLC, we also demonstrate the compatibility, which can be used as a general framework for real-time marketing.
\begin{acks}
We thank the support of the National Natural Science Foundation of China (No.62302310).
\end{acks}
\bibliographystyle{ACM-Reference-Format}
\bibliography{sample-base}


\appendix



\newpage

\section{proof of Proposition \ref{prop1}}\label{app:proof}

\setcounter{proposition}{0} 
\setcounter{proposition}{0} 
\begin{proposition}
(\textbf{Restatement}) If we can find a good predictive function $f$ and transformation function $\xi$ with Lipschitz constraints on contexts such that $|h\left(\boldsymbol{x}^u, \boldsymbol{x}^c, t\right)-f\left(\boldsymbol{x}^u, \xi(\boldsymbol{x}^c), t\right)| \leq \mu, \forall \boldsymbol{x}^u,\boldsymbol{x}^c,t$ and $\left|f\left(\boldsymbol{x}^u, \xi(\boldsymbol{x}_i^c), t\right)\right.$ $\left.- f\left(\boldsymbol{x}^u, \xi(\boldsymbol{x}_j^c), t\right)\right| \leq c \cdot\left\|\xi(\boldsymbol{x}_i^c)-\xi(\boldsymbol{x}_j^c)\right\|_j, \forall \boldsymbol{x}^u,\boldsymbol{x}^c_i,$ $\boldsymbol{x}^c_j,t$. Then the function $\xi$ satisfies the Assumption~\ref{assm2} with $\zeta=c, \eta=2 \mu$.    
\end{proposition}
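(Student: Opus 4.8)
The plan is to reduce the statement to a single application of the triangle inequality by comparing the true regression function $h$ to its Lipschitz surrogate $f$. The target inequality in Assumption~\ref{assm2} bounds the gap $|h(\boldsymbol{x}^u, \boldsymbol{x}^c_i, t) - h(\boldsymbol{x}^u, \boldsymbol{x}^c_j, t)|$ directly in terms of the embedding distance $\|\xi(\boldsymbol{x}^c_i) - \xi(\boldsymbol{x}^c_j)\|_2$, but the two hypotheses we are given control (i) how close $h$ sits to $f$ pointwise (the approximation bound $\mu$) and (ii) how $f$ varies as a function of the embedding (the Lipschitz bound with constant $c$). The natural bridge is therefore to route the difference of $h$-values through the corresponding difference of $f$-values.

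Concretely, I would fix arbitrary $\boldsymbol{x}^u$, $t$, and a pair $i \neq j$, and insert and subtract the two surrogate evaluations $f(\boldsymbol{x}^u, \xi(\boldsymbol{x}^c_i), t)$ and $f(\boldsymbol{x}^u, \xi(\boldsymbol{x}^c_j), t)$ inside the difference of $h$-values. Splitting the resulting expression with the triangle inequality yields three terms: the approximation error at index $i$, the variation of $f$ between the two embeddings, and the approximation error at index $j$. The first and third terms are each bounded by $\mu$ via the predictive-error hypothesis, and the middle term is bounded by $c \cdot \|\xi(\boldsymbol{x}^c_i) - \xi(\boldsymbol{x}^c_j)\|_2$ via the Lipschitz hypothesis. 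Summing gives $c \cdot \|\xi(\boldsymbol{x}^c_i) - \xi(\boldsymbol{x}^c_j)\|_2 + 2\mu$, which is exactly the form required by Assumption~\ref{assm2} with $\zeta = c$ and $\eta = 2\mu$.

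Since the bounds hold for all $\boldsymbol{x}^u$, $t$, and all $i \neq j$ by hypothesis, the conclusion holds universally, which is all that Assumption~\ref{assm2} demands. Honestly there is no substantive obstacle here: the entire argument is a three-term triangle inequality, and the only thing to verify carefully is that the surrogate terms I insert are the correct ones so that each piece matches a given hypothesis exactly (in particular that the Lipschitz bound is stated for $f$ evaluated at a common $\boldsymbol{x}^u$ and $t$ with only the context embedding varying, which is precisely the middle term I produce). I would also note in passing the remark already made in the text that $c$-Lipschitzness of $f$ on the full input $\boldsymbol{z}$ implies $c$-Lipschitzness on the context-embedding coordinate alone, which is what justifies using the Lipschitz hypothesis in the form needed for the middle term.
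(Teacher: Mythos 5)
Your proposal is correct and matches the paper's own proof exactly: both insert the surrogate evaluations $f(\boldsymbol{x}^u,\xi(\boldsymbol{x}^c_i),t)$ and $f(\boldsymbol{x}^u,\xi(\boldsymbol{x}^c_j),t)$, apply the triangle inequality to obtain two approximation terms bounded by $\mu$ each plus one Lipschitz term bounded by $c\cdot\|\xi(\boldsymbol{x}^c_i)-\xi(\boldsymbol{x}^c_j)\|_2$, giving $\zeta=c$ and $\eta=2\mu$. No differences of substance.
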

\begin{proof}
$$
\begin{aligned}
& \left|h\left(\boldsymbol{x}^u, \boldsymbol{x}^c_i, t\right)-h\left(\boldsymbol{x}^u,\boldsymbol{x}^c_j, t\right)\right| 
\leq\left|f\left(\boldsymbol{x}^u, \xi(\boldsymbol{x}_i^c), t\right)-f\left(\boldsymbol{x}^u, \xi(\boldsymbol{x}_j^c), t\right)\right| \\
& +\left|f\left(\boldsymbol{x}^u, \xi(\boldsymbol{x}_i^c), t\right)-h\left(\boldsymbol{x}^u, \boldsymbol{x}^c_i, t\right)\right|+\left|f\left(\boldsymbol{x}^u, \xi(\boldsymbol{x}_j^c)\right)-h\left(\boldsymbol{x}^u,\boldsymbol{x}^c_j, t\right)\right| \\
& \leq c \cdot\left\|\xi(\boldsymbol{x}_i^c)-\xi(\boldsymbol{x}_j^c)\right\|_2+2 \mu
\end{aligned}
$$
\end{proof}

\section{Computational Complexity}
In this section, we analyze the computational complexity of our UMLC. 
The total computational complexity of the $k$-means, co-attention and cross-attention is $O\left(n K_l K_c\right)+O\left(n K_d^2\right)+O\left(\left(n+K_c\right) n K_u\right)+O\left(\left(K_c+n\right) K_u^2\right)+O\left(\left(K_u+n\right) K_c^2\right)$, where $n$ is the number of samples, $K$ is the number of clusters, $l$ is the iterations number of $K$ means, $K_c$ is the dimension of the context embedding, $K_u$ is the dimension of the user embedding, $K_d$ is the dimension of the concatenation of context and user embedding.

\section{Detailed Experiments}\label{app:exp}
\subsection{Synthetic Dataset Generation.}\label{app:syn}
In our problem setting, the synthetic data consists of user features $\boldsymbol{x}^u$; context features $\boldsymbol{x}^c$, treatment $t$, and response $y$, we describe the generation process in the following.

\noindent
$\bullet$ \textbf{User Features Generation.}
We assume that the user features is a $p$ dimensional vector $x^{u}_1, \ldots, x^{u}_p$. Specially, we generate them with the binary features $x^{u}_1, \ldots, x^{u}_{p_b} \sim \text{Bernoulli}(0.5)$, 
and the continuous features $x^u_{p_b+1}, \ldots, x^u_{n_b+n_c} \sim \mathcal{N}(0, 1)$, where $p_b$ and $p_c$ are the dimensions of binary features and the continuous features, respectively, and $p=p_b+p_c$. We set $p_b=34$ and $p_c=66$.

\noindent
\textbf{Context features Generation.}
To simulate the real-word data, we use $q$ as the length of the context features, and there are three types of features in the context features $\boldsymbol{x}^c$, \textit{i.e.}, binary features, continuous features and categorical features. 
Specially, the binary features $x^c_{1}, \ldots, x^c_{q_b} \sim \text{Bernoulli}(0.5)$, 
the continuous features $x^c_{q_b+1}, \ldots, x^c_{q_b+q_c}$ $\sim \mathcal{N}(0, 1)$
and the categorical features $x^c_{q_b+q_c+1}, \ldots, x^{c}_{q_b+q_c+q_m}$ $\sim \text{Multinomial}(4, 0.25, 0.25, 0.25,$ $0.25)$, 
We denote the dimensions as $q_b$, $q_c$, and $q_m$ for them. And we set $q_b=34$, $q_c=66$ and $q_m=3$.
Referring to the Assumption ~\ref{assum1}, the influence of the context features should be divided into $K$ groups. 
In the Synthetic dataset, we set $K=6$, and we generate another categorical feature $x^{c}_q\sim  \text{Multinomial}(6, 0.2, 0.16, 0.16, 0.16, 0.16, 0.16)$, whose influence for the response is separately modeled in the response generation.
\begin{figure}[!t]
    \centering
    \subfigure[Continuous feature 1]{\includegraphics[width=0.45\linewidth]{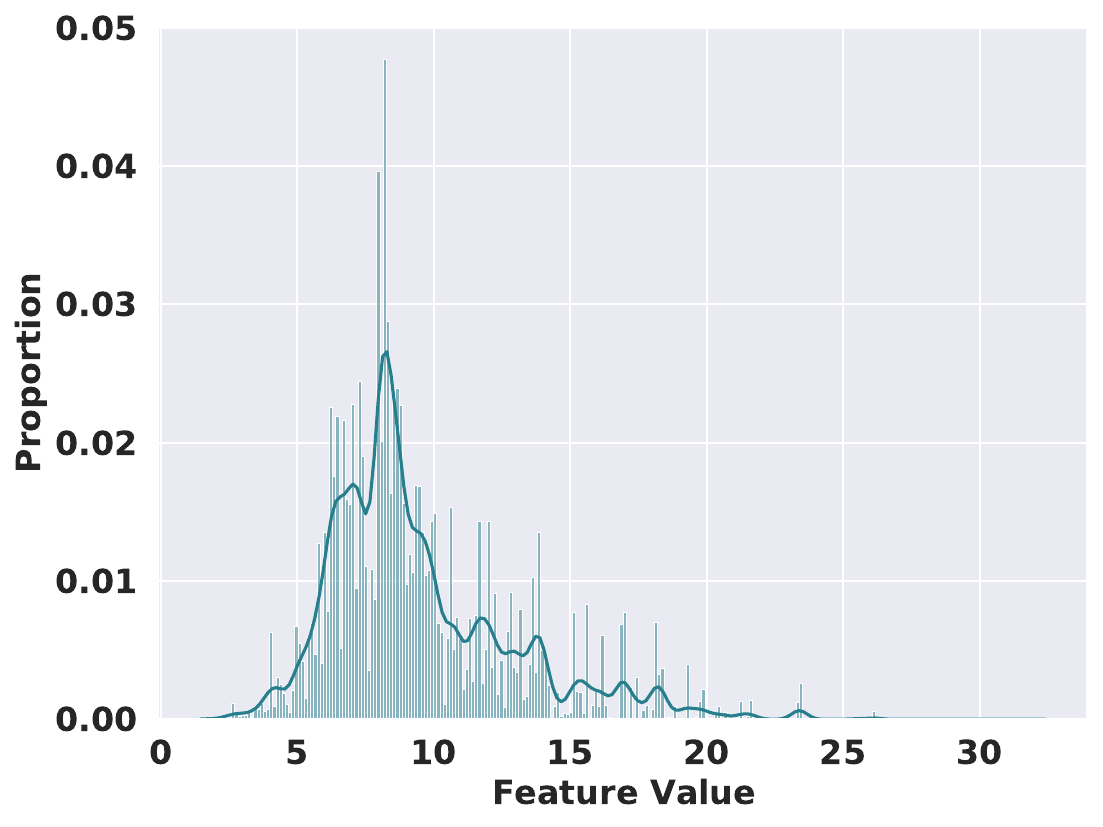}}
    \subfigure[Continuous feature 2]{\includegraphics[width=0.45\linewidth]{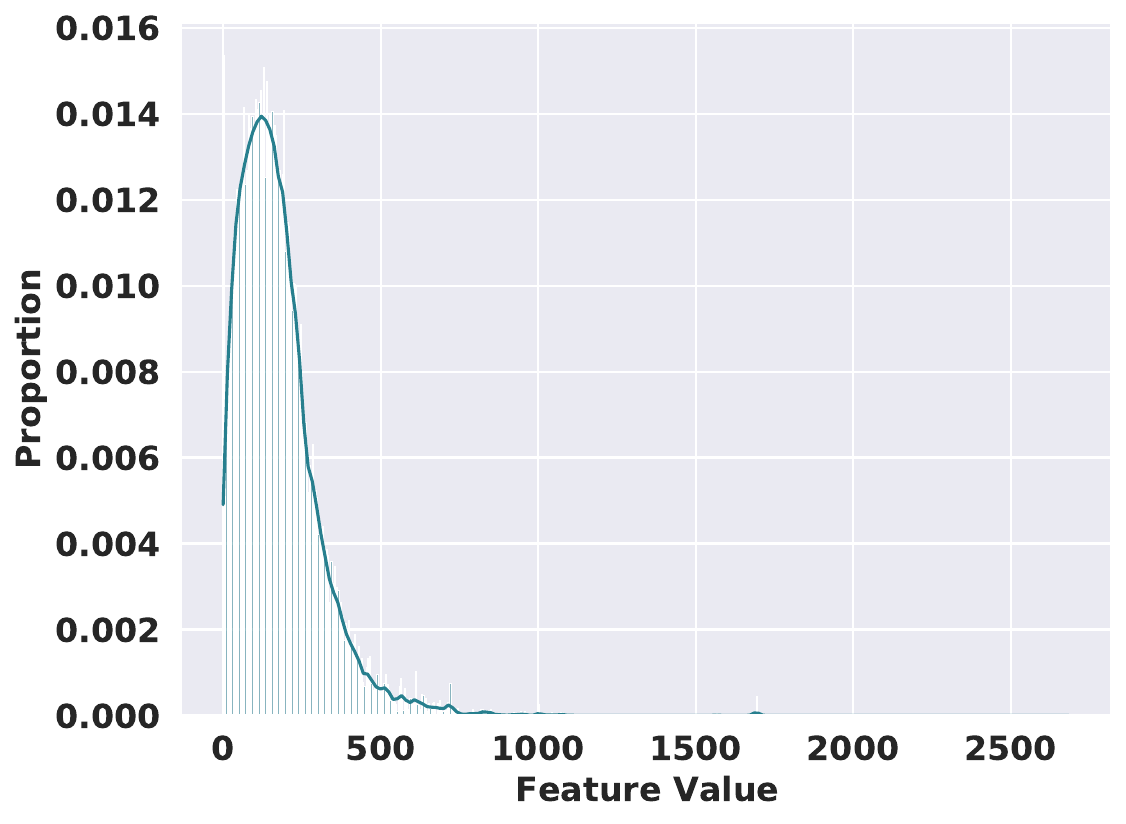}}\\
    \subfigure[Continuous feature 3]{\includegraphics[width=0.45\linewidth]{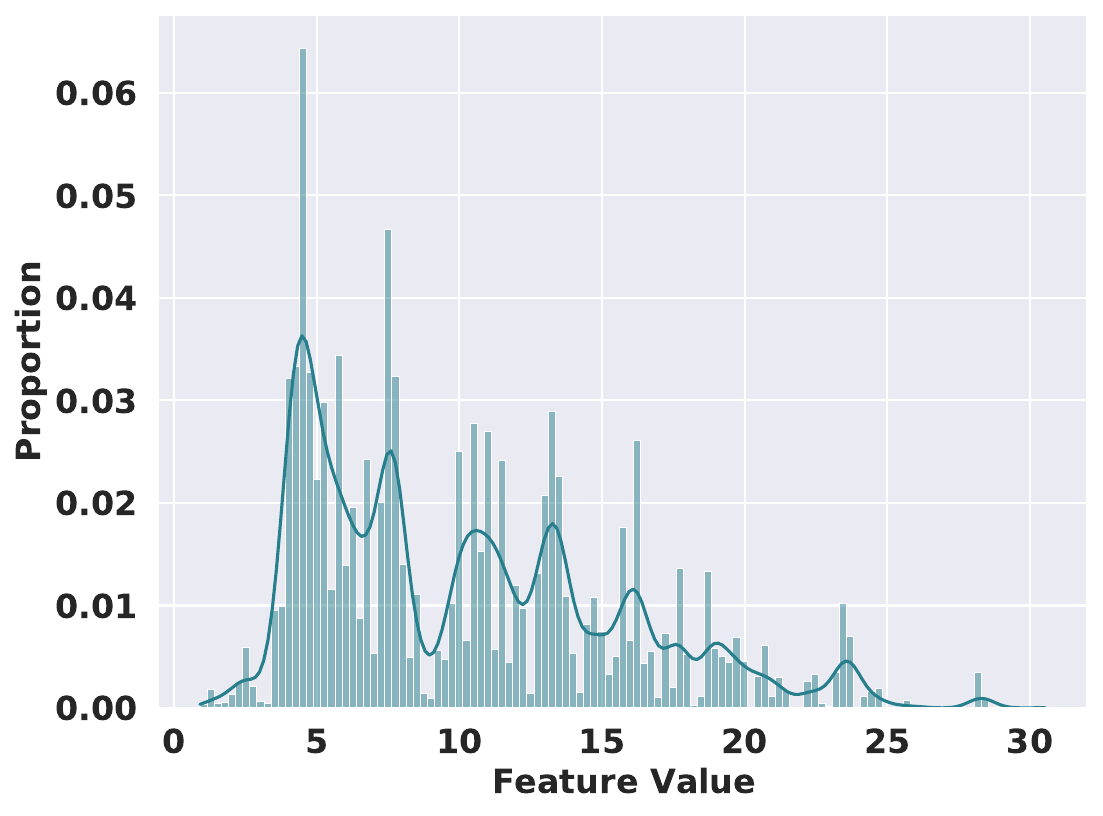}}
    \subfigure[Categorical feature]{\includegraphics[width=0.45\linewidth]{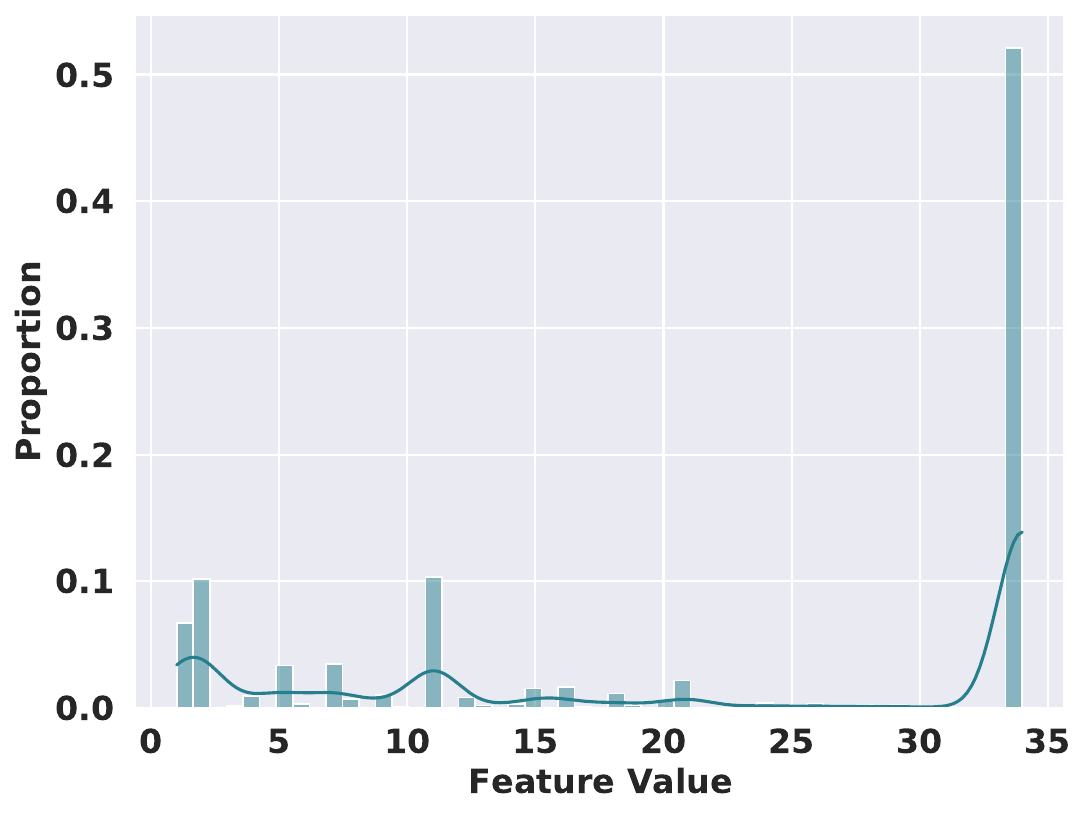}}
    \caption{The feature visualization of the Production dataset.}
    \label{fig:pro_visual}
\end{figure}
\noindent
\textbf{User-Context Features Combination.}
Our goal is to replicate the real-world, real-time marketing scenario, where a relatively small number of users engage with a significantly larger number of contexts. 
After we generate the user features and the context features, we need to design the combination of the user and the contexts.
Specifically,  we generate a large context pool (\textit{i.e.}, 100 times larger than the user number), and each user is associated with a uniquely and randomly chosen subset (\textit{i.e.}, 60-130 contexts) of these contexts, highlighting the one-to-many relationship fundamental to user-context interactions. 

\noindent
\textbf{Treatment Generation.}
In the randomized control trials, the treatment is randomly assigned to each user. Thus, we generate $t\sim  \text{Bernoulli}(0.5)$ for each user, and the samples belonging to each user are assigned with the same $t$.


\noindent
\textbf{Response Generation.}
For each sample, we generate the two potential responses $y^0$ and $y^1$ as follows:  
\begin{equation*}
\begin{aligned}
	y^0 =& 0.5   \sum_{i=1}^{p} x^u_i + 0.5  \sum_{j=1}^{q_c+q_b} x^c_j+ 0.5  \sum_{j=1}^{q_b+q_c}( \sum_{i=1}^{p} x^u_i)  x^c_j\\&+ 0.5   \sum_{v=q_b+q_c+1}^{q_b+q_c+q_m}x^c_ v + z_0 +\epsilon_0,
\end{aligned}
\end{equation*}
where the distribution of $z_0$, \textit{i.e.}, $\mathcal{P}(z_0) \in \{\mathcal{N}(0,1), \mathcal{N}(2,0.5), \mathcal{N}(-1,2),$ $\mathcal{N}(3,1.5), \mathcal{N}(-2,0.8), \mathcal{N}(1,2)  \}$ for the corresponding group (\textit{i.e}., ${x}^c_d$) of the context, $\epsilon_0 \sim \mathcal{N}(0, 1)$ is the noise term.

Expanding upon $y^0$, the user response $y^1$ incorporates the effect of the treatment and is formulated as:
\begin{equation*}
\begin{aligned}
y^1 = &y^0 + 0.2\sum_{i=1}^{p} x^u_i + 0.2  \sum_{j=1}^{q_c+q_b} x^c_j+ 0.2  \sum_{j=1}^{q_b+q_c}( \sum_{i=1}^{p} x^u_i)  x^c_j\\
&+ 0.2   \sum_{v=q_b+q_c+1}^{q_b+q_c+q_m}x^c_ v + z_1 +\epsilon_1.     
\end{aligned}
\end{equation*}
where $z_1$ is similar to $z_0$, and $\epsilon_1 \sim \mathcal{N}(0, 1)$ is the noise term.

Moreover, to better understand the two datasets that we used, we present the statistics of the two datasets in Table~\ref{tab:data}.
\begin{table}[!t]
	\centering
	\caption{Statistics of the Synthetic dataset and Production Dataset. We randomly split the two datasets for train/validation/test split proportion of 70\%/20\%/10\%.}
	\begin{tabular}{ccccc}\toprule
		\multirow{2}{*}{\textbf{Dataset}}  & \multicolumn{2}{c}{\textbf{Feature number}} &\multicolumn{2}{c}{\textbf{User number}}\\ 
		& User & Context &Treated &Control\\\midrule
		Synthetic &100 &103 & 236421 & 236116 \\
		Production &66 &109 &397943  & 397852 \\ \bottomrule
	\end{tabular}
	\label{tab:data}
\end{table}

\subsection{Production Dataset Visualization}
In this section, we make some visualization for the Production dataset, the results are shown in Figure~\ref{fig:pro_visual}, which can help the readers make decisions to implement our UMLC on their own scenario.

\subsection{Experimental Setups}
We present the detailed description of the baselines we used in the following:
\begin{itemize}
	\item \textbf{S-Learner}~\cite{kunzel2019metalearners}:  S-Learner is a kind of meta-learner method that uses a single estimator to estimate the outcome without giving the treatment a special role.
	\item \textbf{T-Learner}~\cite{kunzel2019metalearners}:  T-Learner is similar to S-Learner, which uses two estimators for the treatment and control groups, respectively. 
	\item \textbf{TARNet}~\cite{shalit2017estimating}: TARNet is a commonly used neural network-based uplift model. It uses the shared bottom network to extract feature information.
	\item \textbf{CFRNet}~\cite{shalit2017estimating}: CFRNet (\textit{i.e.}, CFRNet-wass, CFRNet-mmd) applies an additional loss to TARNet, which forces the learned treated and control feature distributions to be closer.
	\item \textbf{DragonNet}~\cite{shi2019adapting}: DragonNet exploits the sufficiency of the propensity score for estimation adjustment, and uses a regularization procedure based on the non-parametric estimation theory.
	\item \textbf{EUEN}~\cite{ke2021addressing}: EUEN is an explicit uplift modeling approach, which can correct the exposure bias.
	\item \textbf{UniTE}~\cite{liu2023unite}: UniTE adopts the Robinson Decomposition~\cite{bratteli2012operator} framework, and design a MMoE~\cite{ma2018modeling} based structure for uplift prediction.
\end{itemize}

We present the detailed description of the metrics we used in the following:
\begin{itemize}
	\item \textsc{AUUC (Area under Uplift Curve)}: A common metric to evaluate the area under the uplift curve ~\cite{rzepakowski2010decision}. We use the CausalML package~\cite{chen2020causalml} to implement the metric.
	\item \textsc{QINI (Qini Coefficient)}~\cite{mouloud2020adapting}: A common metric to evaluate the area under the qini curve, different from AUUC, it scale the responses in control group.
	\item \textsc{KENDALL (Kendall's Rank Correlation)}~\cite{mouloud2020adapting}: A metric to evaluate the average predicted uplift and the predicted uplift in each bin, we report the result of 20 bins. 
\end{itemize}
Additionally, we also evaluate the results of the ground truth uplift prediction on the Synthetic dataset, we use the following metrics:
\begin{itemize}
    \item $\epsilon_{ATE}$: The absolute error on the average treatment effect, which is defined as:
$$
\epsilon_{A T E}=\left|\frac{1}{n} \sum_{i=1}^n\left[y^1-y^0\right]-\frac{1}{n} \sum_{i=1}^n\left[\hat{y}^1-\hat{y}^0\right]\right|
$$
\item $\epsilon_{PEHE}$: In order to measure the accuracy of the uplift prediction, we use the Precision in Estimation of Heterogeneous Effect (PEHE), which is defined as:
$$
\epsilon_{P E H E}=\frac{1}{n} \sum_{i=1}^n\left[\left(y^1-y^0\right)-\left(\hat{y}^1-\hat{y}^0\right)\right]^2
$$
\end{itemize}
\begin{figure}[!t]
	\centering
	\includegraphics[width=\linewidth]{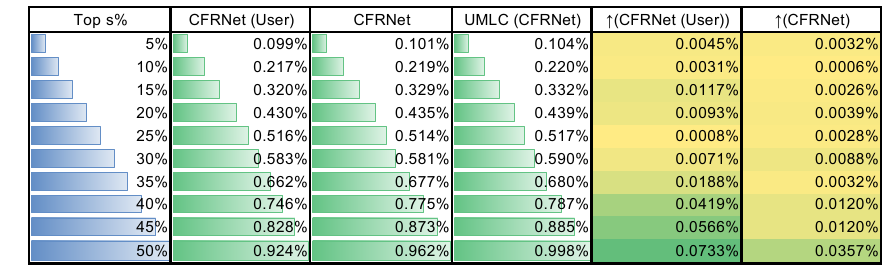}
	\caption{The results of uplift gains (\textit{i.e}., only user, user + context, our UMLC with user + context), we use 5\% as the step size.}
	\label{fig:uplift_gain}
\end{figure}
\begin{figure*}[!t]
	\centering
	\subfigure[Batch  size $b$ on Synthenic dataset]{\includegraphics[width=0.24\linewidth]{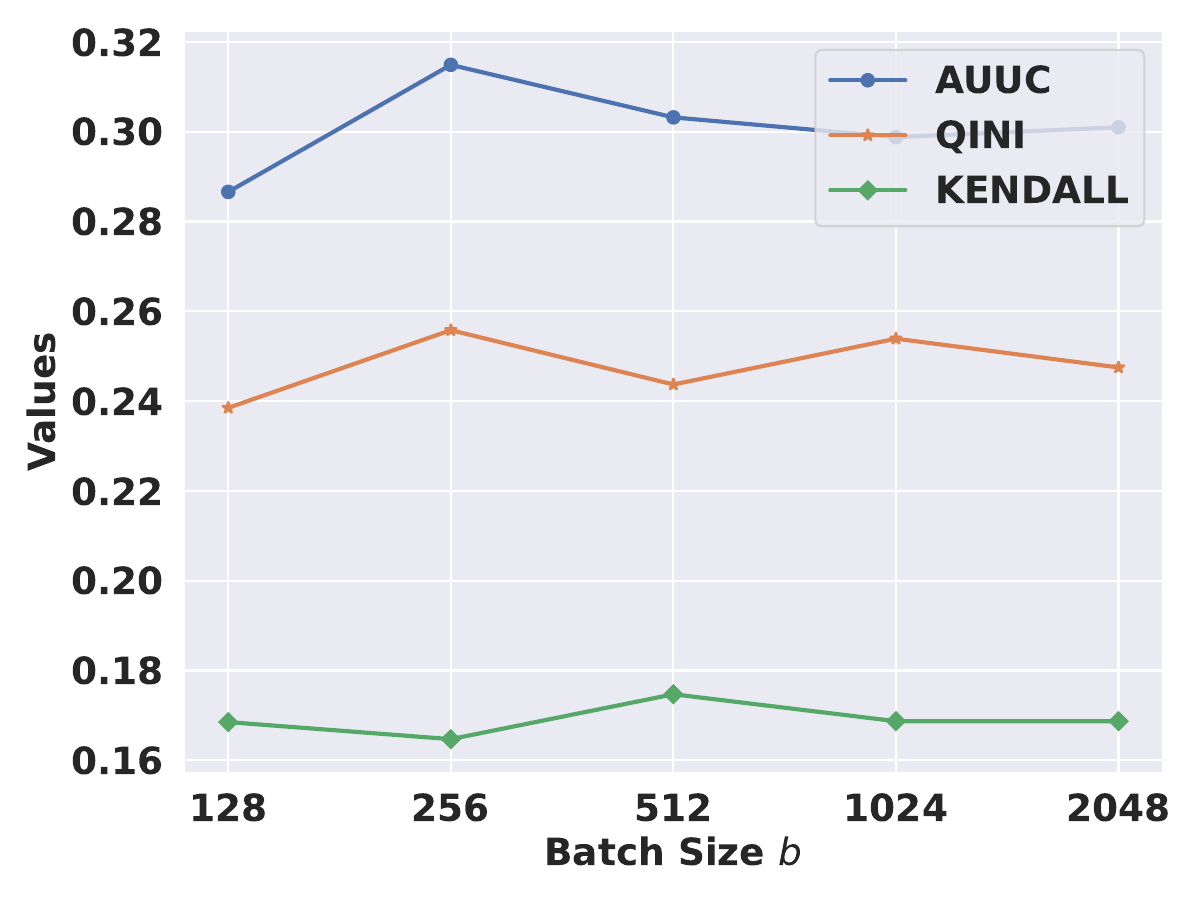}} 
	\subfigure[Loss weight $\beta$ on Synthenic dataset]{\includegraphics[width=0.24\linewidth]{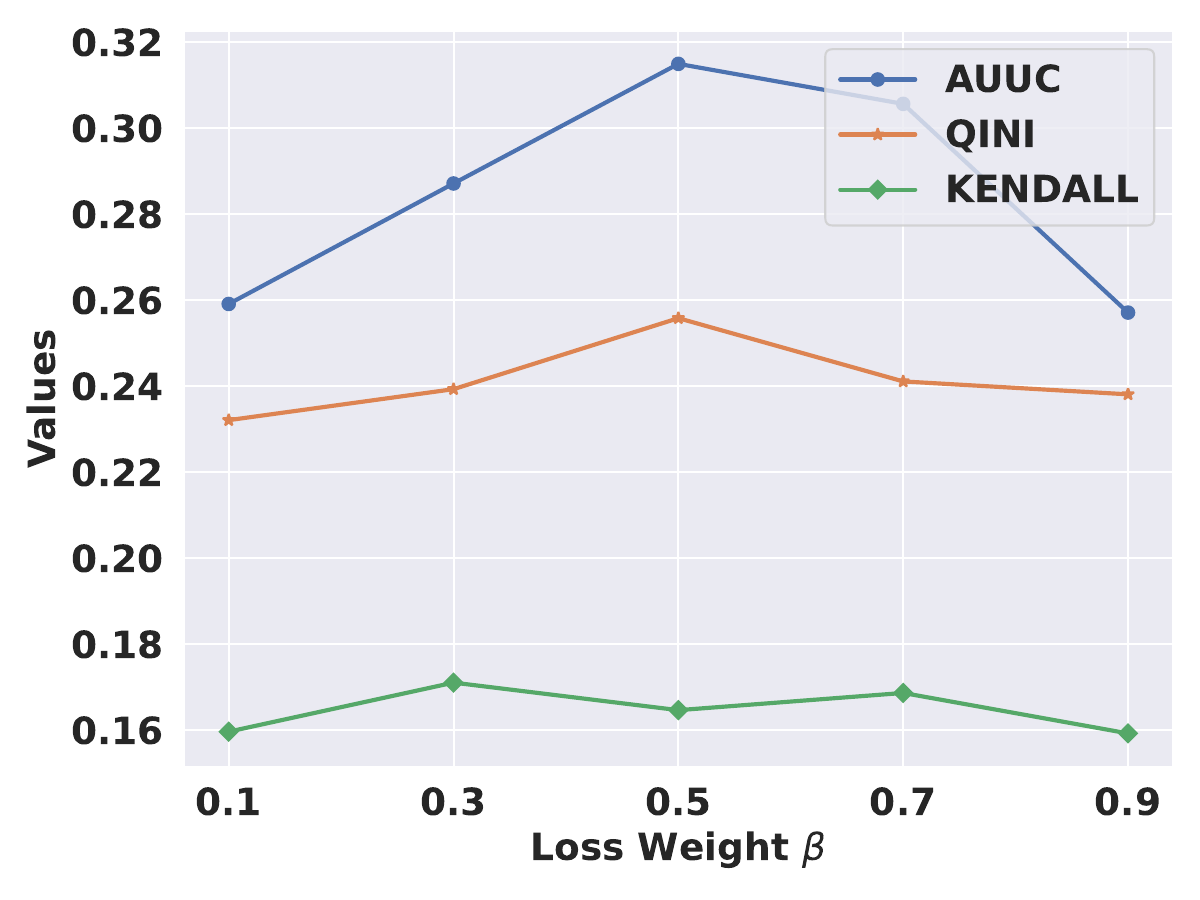}}
	\subfigure[Loss weight $\gamma$ on Synthenic dataset]{\includegraphics[width=0.24\linewidth]{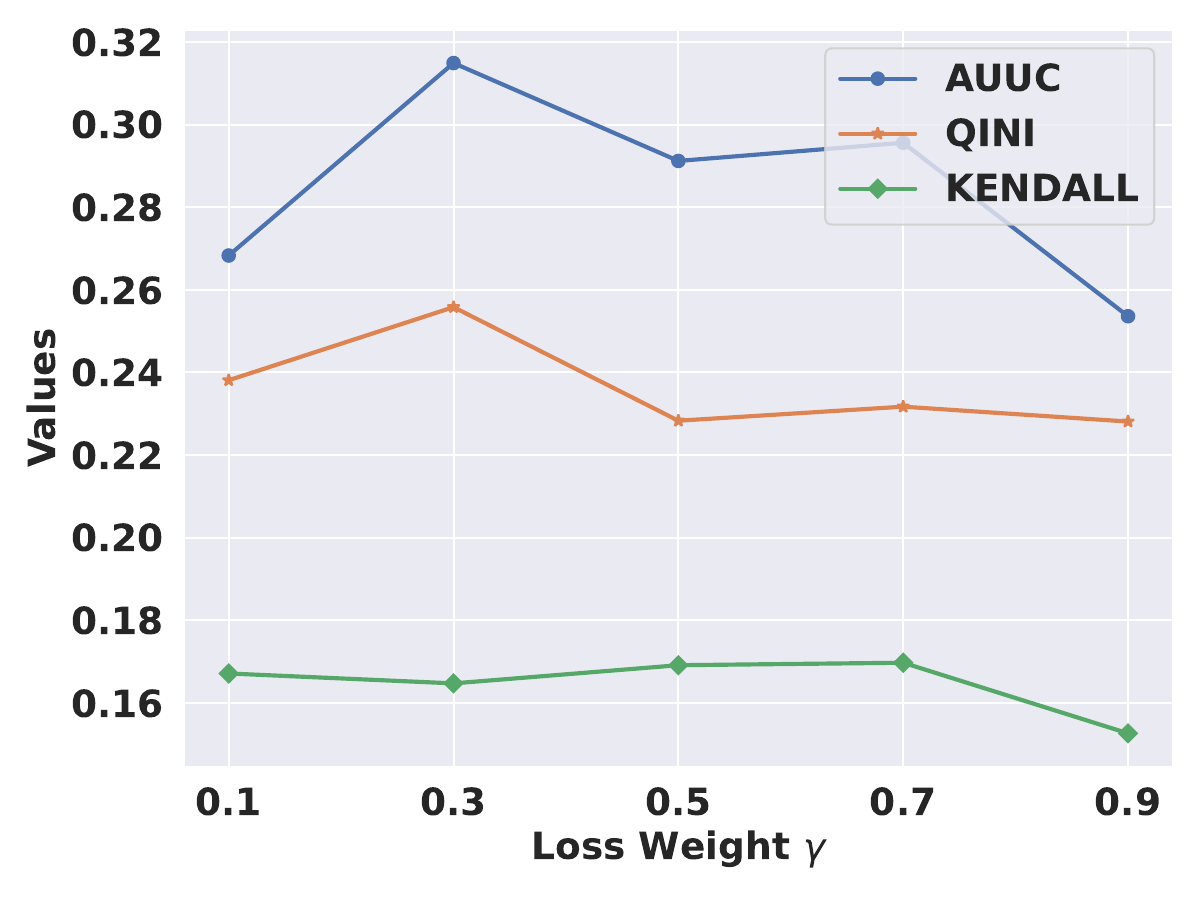}}\\
	\subfigure[Batch size $b$ on Production dataset]{\includegraphics[width=0.24\linewidth]{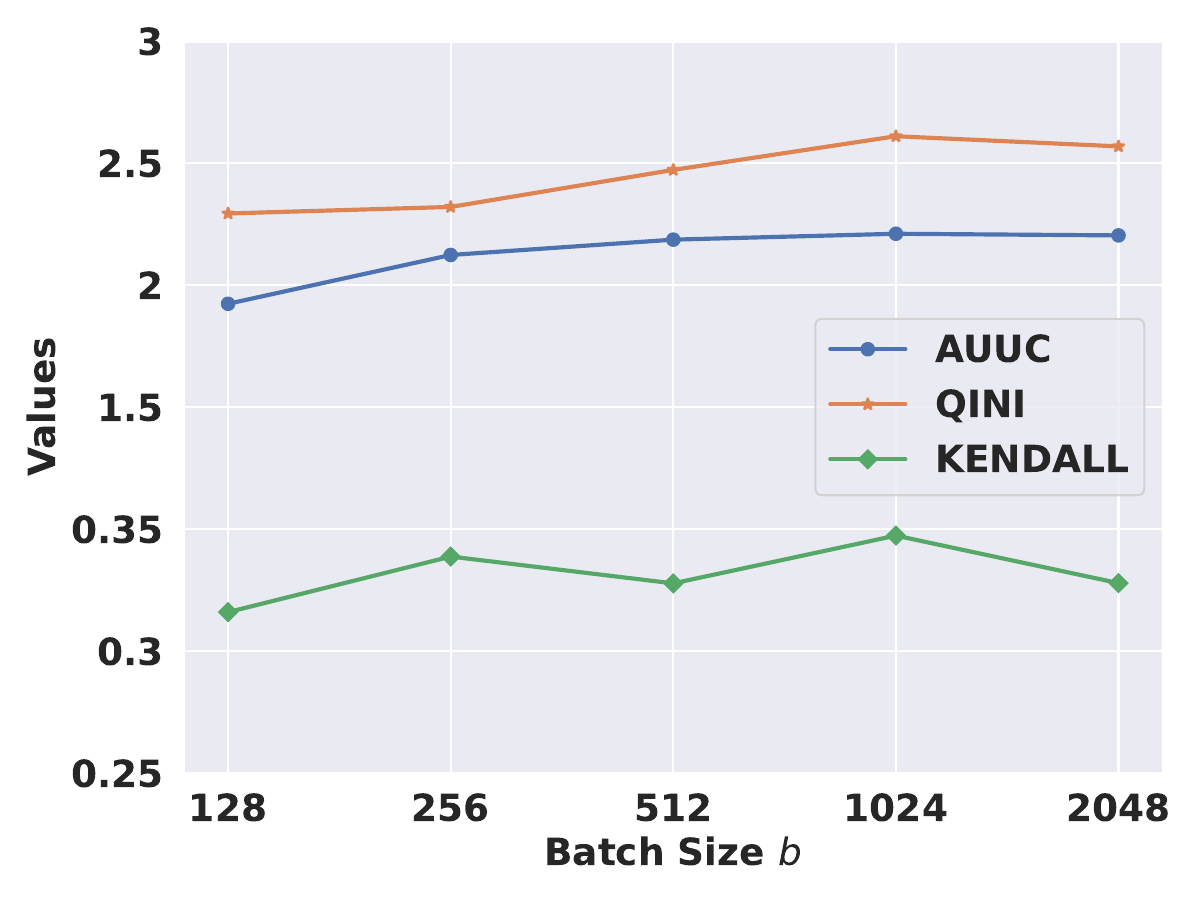}} 
	\subfigure[Loss weight $\beta$ on Production dataset]{\includegraphics[width=0.24\linewidth]{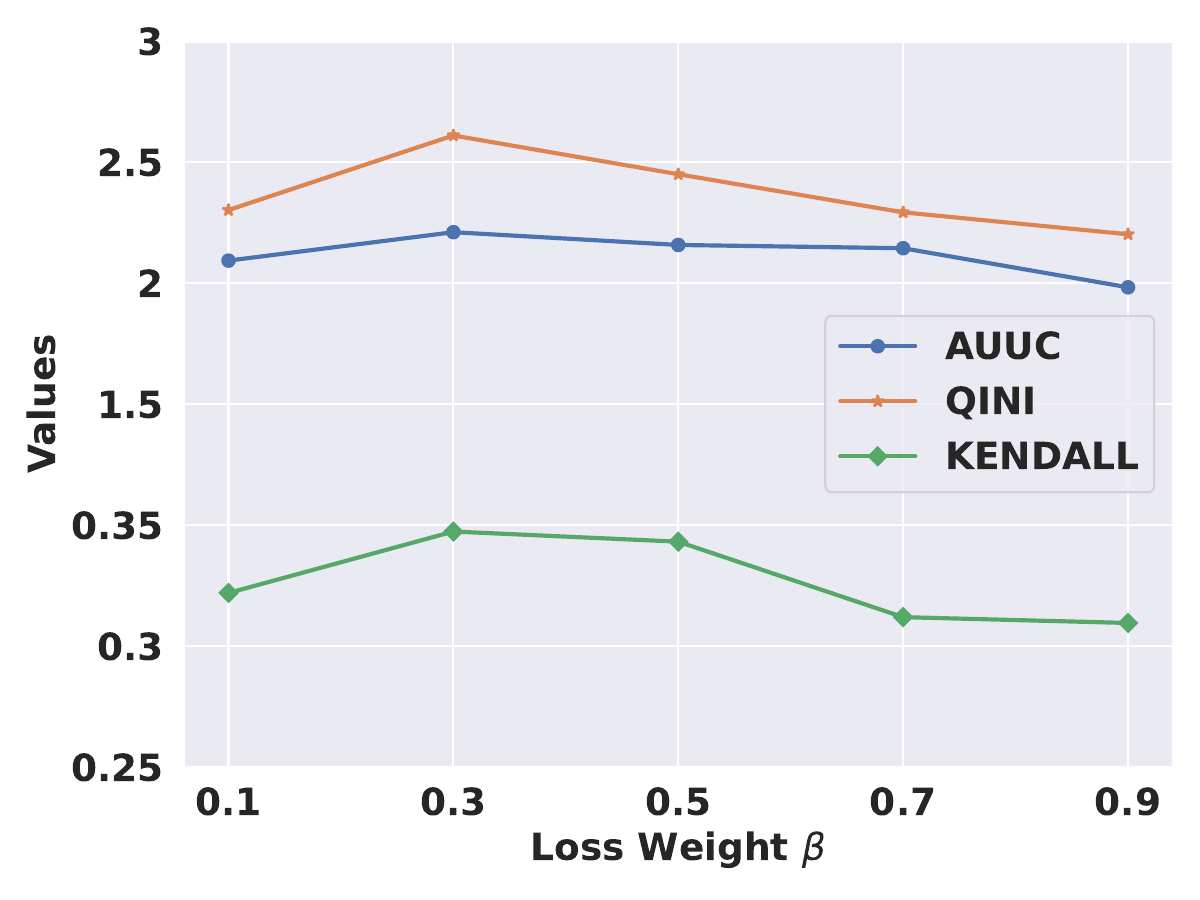}}
	\subfigure[Loss weight $\gamma$ on Production dataset]{\includegraphics[width=0.24\linewidth]{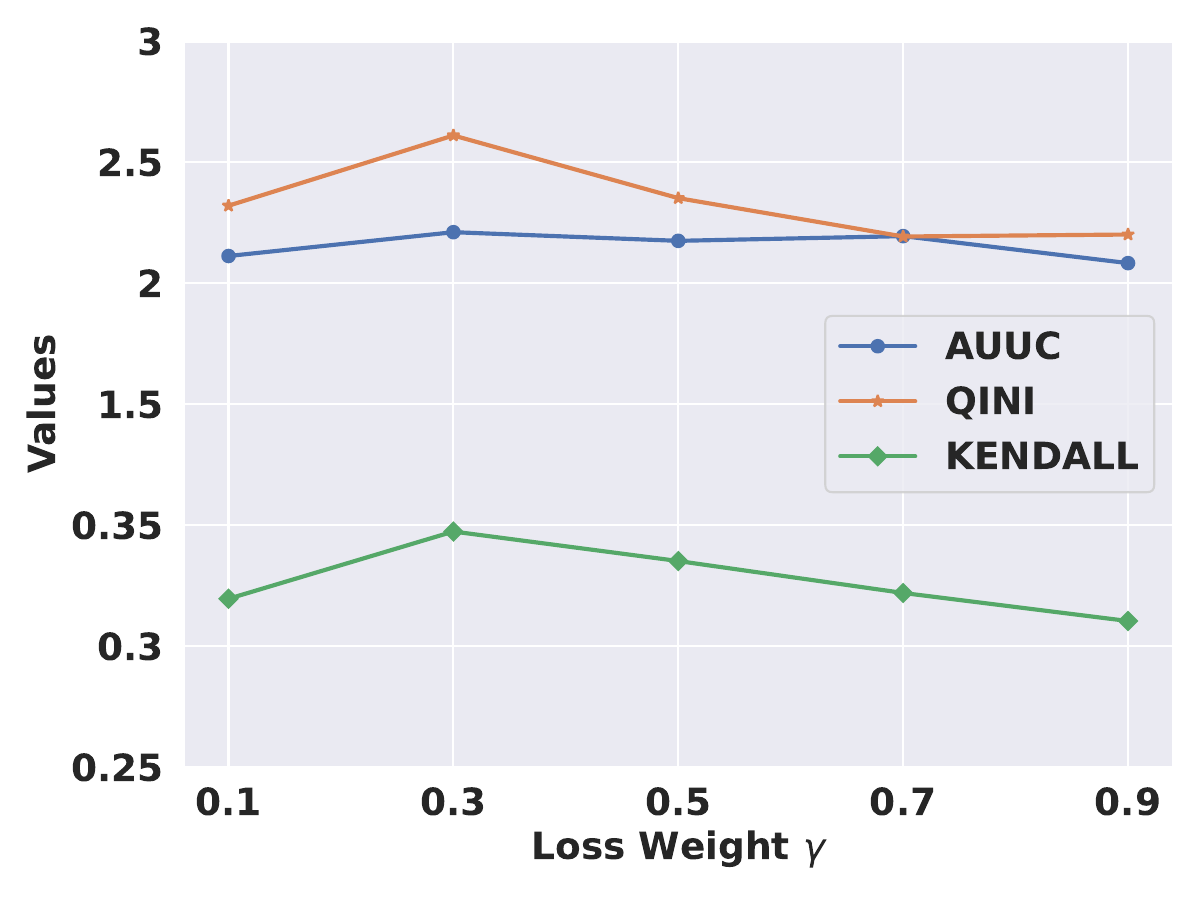}}
  \vspace{-0.3cm}
	\caption{The results of the hyperparameter analysis on the two datasets (\textit{i.e.}, batch size $b$, loss weight $\beta$ and loss weight $\gamma$). We report the results of the mean over five runs with different seeds.}
	\label{fig:hyper}
 \vspace{-0.2cm}
\end{figure*}

\subsection{Ground Truth Uplift Evaluation}

We present the evaluation of the ground truth uplift in Table~\ref{tab:ground_truth}. From the table, it can be observed that for $\epsilon_{A T E}$, UMLC (DragonNet) achieves the best performance with a mean value of 0.4083 , closely followed by UMLC (EUEN) with a mean of 0.4193 . For $\epsilon_{ PEHE }$, UMLC (EUEN) performed the best with a mean value of 2.3237 , while UMLC (CFRNet-mmd) ranked second with a mean of 2.3482. 
Our UMLC consistent superiors performance across two metrics, which demonstrates UMLC's advantages and reliability in uplift prediction.
\begin{table}[!t]
\centering
\caption{The results of ground truth uplift evaluation on Synthetic Dataset. We report the results over five random seeds. The best results and second best results are \textbf{bold} and \underline{underlined}, respectively.}\label{tab:ground_truth}
\begin{tabular}{c|cc} \toprule
\textbf{Method} & $\epsilon_{ATE}$ & $\epsilon_{PEHE}$ \\
\midrule S-Learner & 0.6093 $\pm$ 0.1592 & 3.1983 $\pm$ 0.1921 \\
 T-Learner & 0.5727 $\pm$ 0.1628 & 2.9817 $\pm$ 0.1729 \\
 TARNet & 0.5139 $\pm$ 0.1529 & 2.7789 $\pm$ 0.1948 \\
 CFRNet-mmd & 0.5133 $\pm$ 0.1190 & 2.3743 $\pm$ 0.1829 \\
 CFRNet-wass & 0.5017 $\pm$ 0.1739 & 2.4283 $\pm$ 0.1638 \\
 DragonNet & 0.5129 $\pm$ 0.1294 & 2.7891 $\pm$ 0.1384 \\
 EUEN & 0.4938 $\pm$ 0.1599 & 2.3802 $\pm$ 0.2193 \\
 UniTE & 0.5213 $\pm$ 0.1074 & 2.8345 $\pm$ 0.1739 \\
\midrule  \rowcolor{mygray} \textbf{UMLC} (CFRNet-mmd) & 0.4932 $\pm$ 0.1439 & \underline{2.3482} $\pm$ 0.1274 \\
  \rowcolor{mygray} \textbf{UMLC} (CFRNet-wass) & 0.4721 $\pm$ 0.1548 & 2.3793 $\pm$ 0.1539 \\
  \rowcolor{mygray}  \textbf{UMLC} (DragonNet) & \textbf{0.4083} $\pm$ 0.1283 & 2.4930 $\pm$ 0.1983 \\
  \rowcolor{mygray} \textbf{UMLC} (EUEN) & \underline{0.4193} $\pm$ 0.1492 & \textbf{2.3237} $\pm$ 0.1385 \\
 \rowcolor{mygray} \textbf{UMLC} (UniTE) & 0.4892 $\pm$ 0.1947 & 2.6941 $\pm$ 0.1928 \\ \bottomrule
\end{tabular}
\end{table}

\subsection{Algorithm Pseudocode}\label{app:code}
In this section, to easily understand our proposed framework, we provide the pseudocode of our UMLC in Algorithm~\ref{alg:alg1}.
\begin{algorithm}[h]
\centering
\caption{The training procedure of our UMLC.}\label{alg:alg1}
\begin{algorithmic}[1]
\Require User features $\boldsymbol{x}^u$, context features $\boldsymbol{x}^c$, treatment $t$, response $y$, regression model parameters $\boldsymbol{\varphi}$, uplift model parameters $\boldsymbol{\omega}$, hyperparameters $\alpha$, $\beta$, $\gamma$.
\Ensure The regression model parameters $\boldsymbol{\varphi}$ and uplift model parameters $\boldsymbol{\omega}$.
\Statex \hspace{-0.5cm}{\textsc{Response-guided Context Grouping}}
\State Calculate the prediction loss according to Eq.~\eqref{eq:pred_loss}. 
\State Calculate the Lipschitz Regularization according to Eq.~\eqref{eq:lips}.
\State Update the regression model parameters $\boldsymbol{\varphi}$ according to Eq.~\eqref{eq:reg_loss}.
\State Group and aggregate the training data according to Eq.~\eqref{eq:group_data}.
\Statex \hspace{-0.5cm} \textsc{Feature Interaction}
\State Get the user and context attention matrix according to Eq.~\eqref{eq:att_matrix}.
\State Calculate the user and context attention vectors according to Eq.~\eqref{eq:atten_vec}. 
\State Predict the control response according to Eq.~\eqref{eq:pred_control}.
\State Calculate the information gain by treatment assignment according to Eq.~\eqref{eq:infor_gain}.
\State Predict the uplift according to Eq.~\eqref{eq:uplift_pred}.
\State Update the uplift model parameters $\boldsymbol{\omega}$ according to Eq.~\eqref{eq:uplift_loss}.
\end{algorithmic}
\end{algorithm}

\subsection{Uplift Gain Experiments}
To further evaluate the performance of our UMLC, we conduct the uplift gain experiments on the Synthetic dataset, which has two potential responses for each sample. 
After we have the learned uplift models, we sort the samples by the uplifts that the model predicted. 
And then, we assign the top $s\%$ of the samples with treatment $t=1$ and the rest with $t=0$.
Comparing to the strategy that all the samples are assigned with $t=0$, we can calculate the uplift gain (\textit{i.e.}, Gain) by the sum of corresponding user responses, \textit{i.e.}, 
$$
\text{Gain} = \frac{ \sum_{s\%} y^1 +\sum_{\text{rest}} y^0-\sum y^0}{\sum y^0}\times 100 \%
$$
where ${s\%}$ and rest represent the top $s\%$ of the samples and the rest of the samples, respectively.


We present the results in Figure~\ref{fig:uplift_gain}, we use 5\% as the step size. From the results, we have the following observations.
1) From the results of CFRNet (User) and CFRNet, we can see that the CFRNet performs better than CFRNet (User).
This indicates that only considering the user to assign treatments for each sample may introduce bias for the model learning. 
2) Our UMLC consistently outperforms CFRNet (User) and CFRNet across all the top $s\%$ of the samples. 
With the percentage of the sample increase, the improvement of our UMLC becomes larger.
This may be because when the $s\%$ increases, the context groups of our UMLC can mitigate the prediction variance of the large-scale contexts, which can help the model get a better ranking of the samples to achieve higher uplift gain.

\begin{figure}[!t]
	\centering
	\subfigure[Groups matching results]{\includegraphics[width=0.48\linewidth]{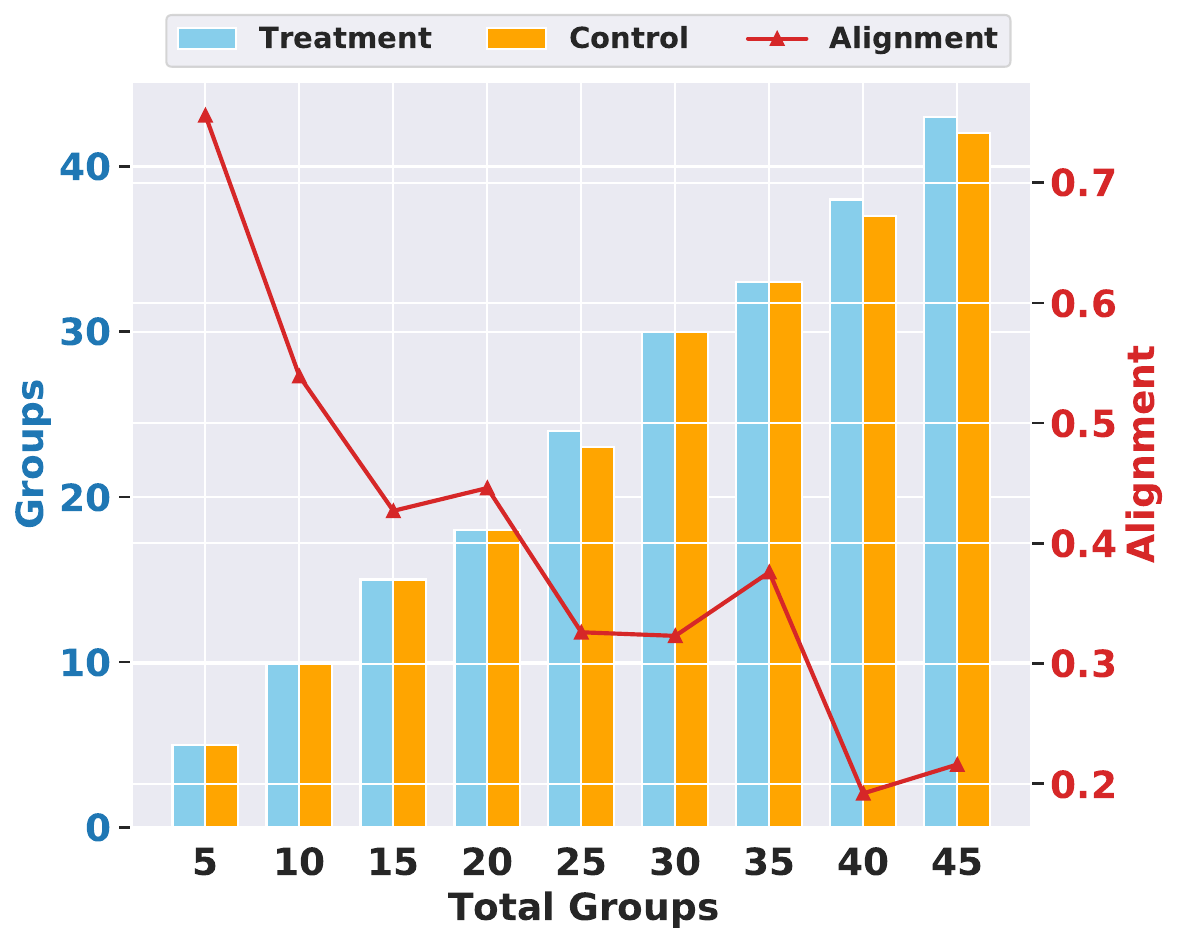}}\quad 
	\subfigure[Performance with different groups]{\includegraphics[width=0.48\linewidth]{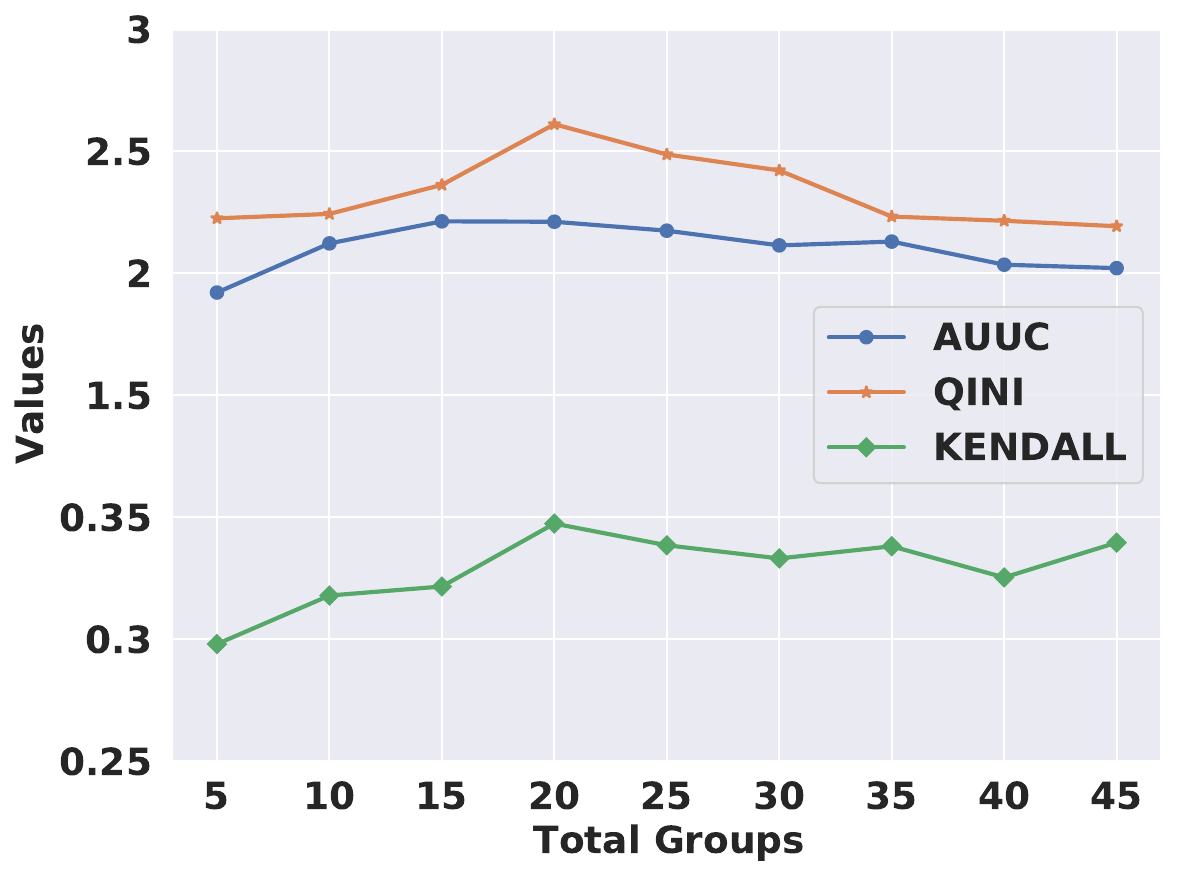}}
	\caption{The evaluation of different group number $K$ (5-45). We report the mean over five runs with different seeds.}
	\label{fig:pro_clu_performance}
\end{figure}
\begin{figure*}[!t]
	\centering  \includegraphics[width=0.8\linewidth]{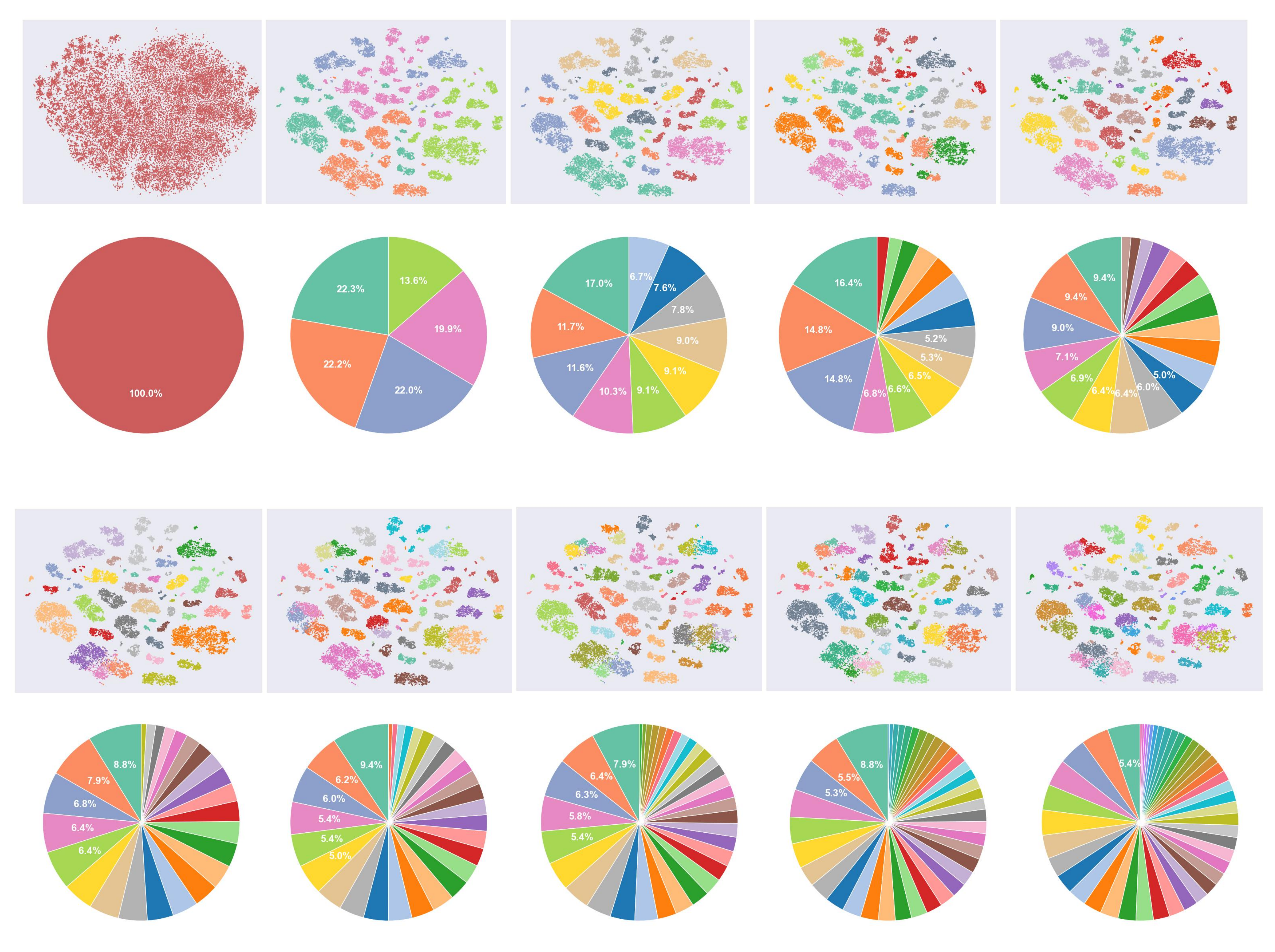}
	\caption{The context embedding t-SNE visualization of different group number $K$ (5-45, 5 per step) and the sample counts on the trained context embedding of Production dataset. The first is the distribution of the normalized original context data.}
	\label{fig:clu_pro}
\end{figure*}
\subsection{Hyperparameter Sensitivity analysis}

In this section, we conduct experiments on the two datasets to evaluate the influence of the important hyperparameters, \textit{i.e.}, batch size $b$, loss weight $\beta$, and loss weight $\gamma$. 
In particular, we use the UMLC (CFRNet-mmd) as the model to handle the experiments; we present the results in Figure~\ref{fig:hyper}.

From the results, we have the following observations. 1) Our UMLC is pretty robust to the batch size $b$, even though we adjust the sample weights on the batch level. This shows that the information gain we used by the treatment feature interaction is a soft approach for the model optimization, based on the trained user-context embedding, the loss weight may be rational for the model optimization, which does not influence the model performance much.
2) For the loss weight $\beta$ and $\gamma$, the model performance first increases across various metrics, but it suffers from the decrease when the weights are bigger than the specific values.
This may be because the two components of the loss function can be regarded as the regularizations of the model; the big weights may cause the model to lose the power of prediction. 
Thus, the readers need to carefully adjust their values on the different datasets.

\begin{table*}[!t]
    \centering
    \caption{Additional Ablation study of our UMLC with another three uplift models on Synthetic and Production datasets. We report the results over five random seeds. The best results and second best results are \textbf{bold} and \underline{underlined}, respectively.}    \label{tab:add_ablation}
    \begin{tabular}{c|ccccccc}\toprule
\multirow{2}{*}{\textbf{Method}} & \multicolumn{3}{c}{Synthetic Dataset} & &\multicolumn{3}{c}{Production Dataset}  \\\cline{2-4} \cline{6-8}
 & AUUC & QINI & KENDALL & & AUUC & QINI & KENDALL\\
 \midrule
\rowcolor{mygray} \textbf{UMLC} (S-Learner) & \textbf{0.2346} $\pm$ 0.0226 & \textbf{0.2597} $\pm$ 0.0283 & \textbf{0.1304} $\pm$ 0.0126 & & \textbf{1.9038} $\pm$ 0.0583 & \textbf{2.4983} $\pm$ 0.0518 & \textbf{0.2792} $\pm$ 0.0237 \\ \midrule
 w/o RCG & 0.2145 $\pm$ 0.0264 & 0.2512 $\pm$ 0.0233 & 0.1207 $\pm$ 0.0135  & & 1.8263 $\pm$ 0.0536 & 2.3252 $\pm$ 0.0483 & 0.2613 $\pm$ 0.0236  \\
 w/o UCI & 0.2259 $\pm$ 0.0225 & \underline{0.2519} $\pm$ 0.0214 & 0.1240 $\pm$ 0.0137  & & 1.8342 $\pm$ 0.0547 & \underline{2.4552} $\pm$ 0.0462 & 0.2641 $\pm$ 0.0263 \\
 w/o TFI & \underline{0.2343} $\pm$ 0.0204 & 0.2585 $\pm$ 0.0195 & \underline{0.1296} $\pm$ 0.0147  & & \underline{1.8591} $\pm$ 0.0556 & 2.4205 $\pm$ 0.0483 & \underline{0.2763} $\pm$ 0.0247 \\ \midrule
\rowcolor{mygray}  \textbf{UMLC} (T-Learner) & \textbf{0.2734} $\pm$ 0.0284 & \textbf{0.2783} $\pm$ 0.0223 &\textbf{0.1279} $\pm$ 0.0148 & & \textbf{2.0321} $\pm$ 0.0525 &\textbf{2.4983} $\pm$ 0.0527 & \textbf{0.3174} $\pm$ 0.0237\\ \midrule
 w/o RCG & 0.2291 $\pm$ 0.0282 & 0.2230 $\pm$ 0.0226 & 0.1143 $\pm$ 0.0184 & & 1.8133 $\pm$ 0.0522 & 2.3531 $\pm$ 0.0438 & 0.2729 $\pm$ 0.0274 \\
 w/o UCI & 0.2465 $\pm$ 0.0235 & 0.2483 $\pm$ 0.0194 & 0.1171 $\pm$ 0.0136 & & 1.8781 $\pm$ 0.0573 & 2.1255 $\pm$ 0.0462 & \underline{0.2982} $\pm$ 0.0224 \\
 w/o TFI & \underline{0.2657}  $\pm$ 0.0228 & \underline{0.2598} $\pm$ 0.0122 & \underline{0.1215} $\pm$ 0.0137 & & \underline{2.0201}$\pm$ 0.0583 & \underline{2.4218} $\pm$ 0.0542 & 0.2947 $\pm$ 0.0295 \\ \midrule
\rowcolor{mygray}  \textbf{UMLC} (TARNet) & \textbf{0.2913} $\pm$ 0.0294 &\textbf{0.2472} $\pm$ 0.0228 & \textbf{0.1219} $\pm$ 0.0162 & &\textbf{2.1349} $\pm$ 0.0552 & \underline{2.3842} $\pm$ 0.0454 & \textbf{0.3306} $\pm$ 0.0273 \\\midrule
 w/o RCG & 0.2296 $\pm$ 0.0235 & 0.2230 $\pm$ 0.0223 & \underline{0.1194} $\pm$ 0.0146 & & 1.9073 $\pm$ 0.0584 & \textbf{2.3983} $\pm$ 0.0525 & 0.3278 $\pm$ 0.0235 \\
 w/o UCI & 0.2567 $\pm$ 0.0292 & 0.2268 $\pm$ 0.0235 & 0.1187 $\pm$ 0.0184 & & 1.8578 $\pm$ 0.0587 & 2.3255 $\pm$ 0.0484 & \underline{0.3284} $\pm$ 0.0271 \\
 w/o TFI & \underline{0.2783} $\pm$ 0.0223 & \underline{0.2445} $\pm$ 0.0209 & 0.1175 $\pm$ 0.0142 & & \underline{2.0422} $\pm$ 0.0554 & 2.3365 $\pm$ 0.0427 & 0.3263 $\pm$ 0.0235 \\ \bottomrule
    \end{tabular}
    \label{tab:my_label}
\end{table*}

\subsection{Context Grouping Analysis}

To further validate the effectiveness of the response-guided context grouping module, we also conducted experiments to show the visualization, the alignment, and the performance of our UMLC on the Production dataset.
The alignment and the performance of the learned model (\textit{i.e.}, UMLC (CFRNet-mmd)) are shown in Figure~\ref{fig:clu_pro}, the visualization is shown in Figure~\ref{fig:clu_pro}.
Due to the complexity of the context data in the Production dataset, the true group number of the learned context embedding is unknown, we show the cluster visualization and sample counter from 5 to 45 (5 per step).
To reflect the context grouping performance more clearly, instead of bar charts,  we use pie charts for the sample counter of each group number $K$ on the Production dataset.
The sample counter, which is bigger than 5\% of the total samples, is indicated; with bigger $K$, the sample counters of most groups are less than 5\%. 
We combine the results in Figure~\ref{fig:pro_clu_performance} with the visualization to further analyze the performance.
From Figure~\ref{fig:pro_clu_performance}(a), the context groups in the treatment and control groups become imbalanced with big total groups $K$ (i.e., 25, 35, 40, 45).
This may be because the Production dataset is collected from a real-world scenario, which is more complex than the Synthetic dataset. 
However, considering the alignment results and the model performance, they suffer from a big decline after the group number $K$ is bigger than 20.
This indicates that, even with more complex data, we can target the number of context groups without training the whole model.

\subsection{Additional Ablation Study}
In the main contents, we conduct the experiments with four base uplift models. Due to the space limitation of the main contents, we present the experimental results with three another uplift models in Table~\ref{tab:add_ablation}.
From the results, we can see that, with our UMLC, we can consistently improve the model performance on most metrics. 
And the ablation study shows that removing each part of our UMLC will hurt the model performance.

\subsection{Additional Evaluations}\label{app:add_eval}
In the main contents of our paper, we focus on comparing each base model with and without UMLC. 
To better evaluate the performance of our UMLC, in this section, we conduct the variant by three aspects, (1) without context grouping, (2) grouped with Kmeans, and (3) grouped with other cluster methods.
The results are presented in Figure~\ref{fig:additional_eval}.
From the results, we can see that our UMLC performs better than only using user features, using the category feature for grouping or applying Kmeans on the context samples. 
For the variants of UMLC (\textit{i.e}., UMLC-Hierarchical, UMLC-Spectral and UMLC-DBSCAN), they perform similar with the original UMLC, which verifies the robustness of our UMLC for different cluster methods.

\begin{figure*}[!t]
    \centering
    \subfigure[Synthetic dataset]{\includegraphics[width=1\linewidth]{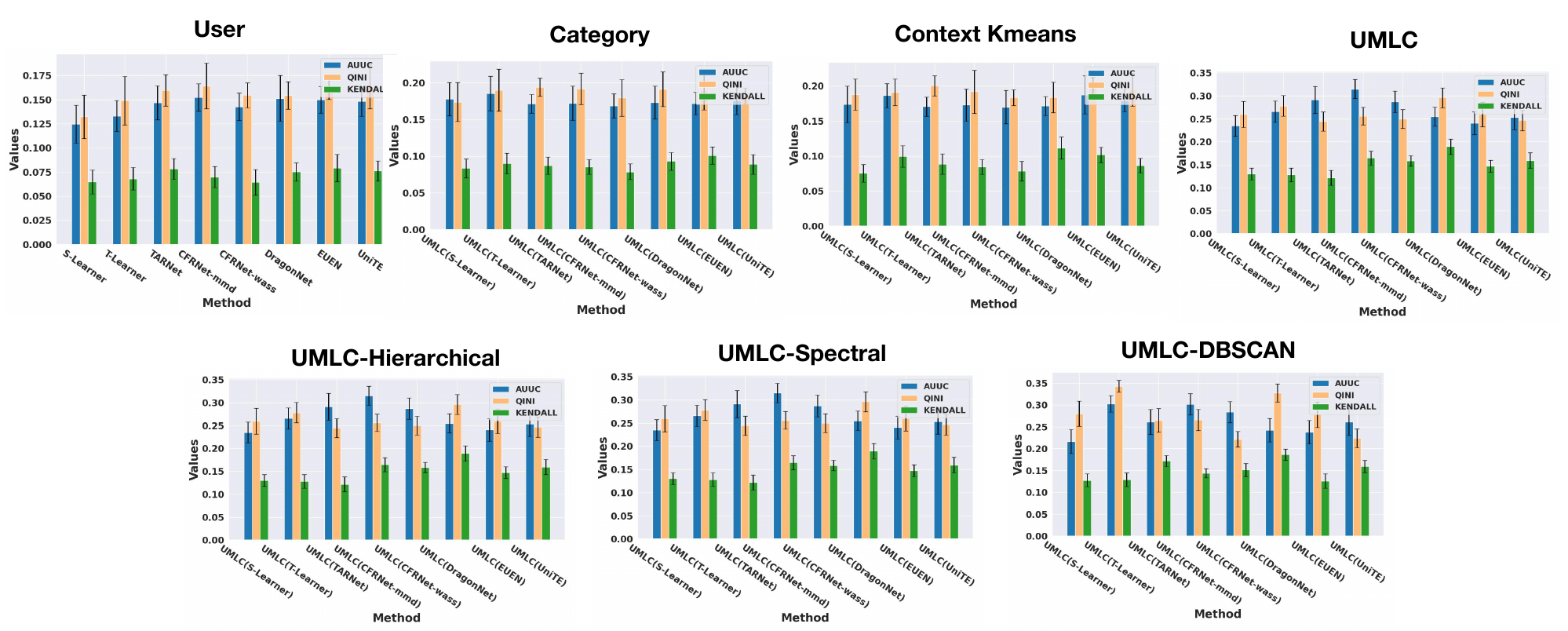}}\\
    \subfigure[Production dataset]{\includegraphics[width=1\linewidth]{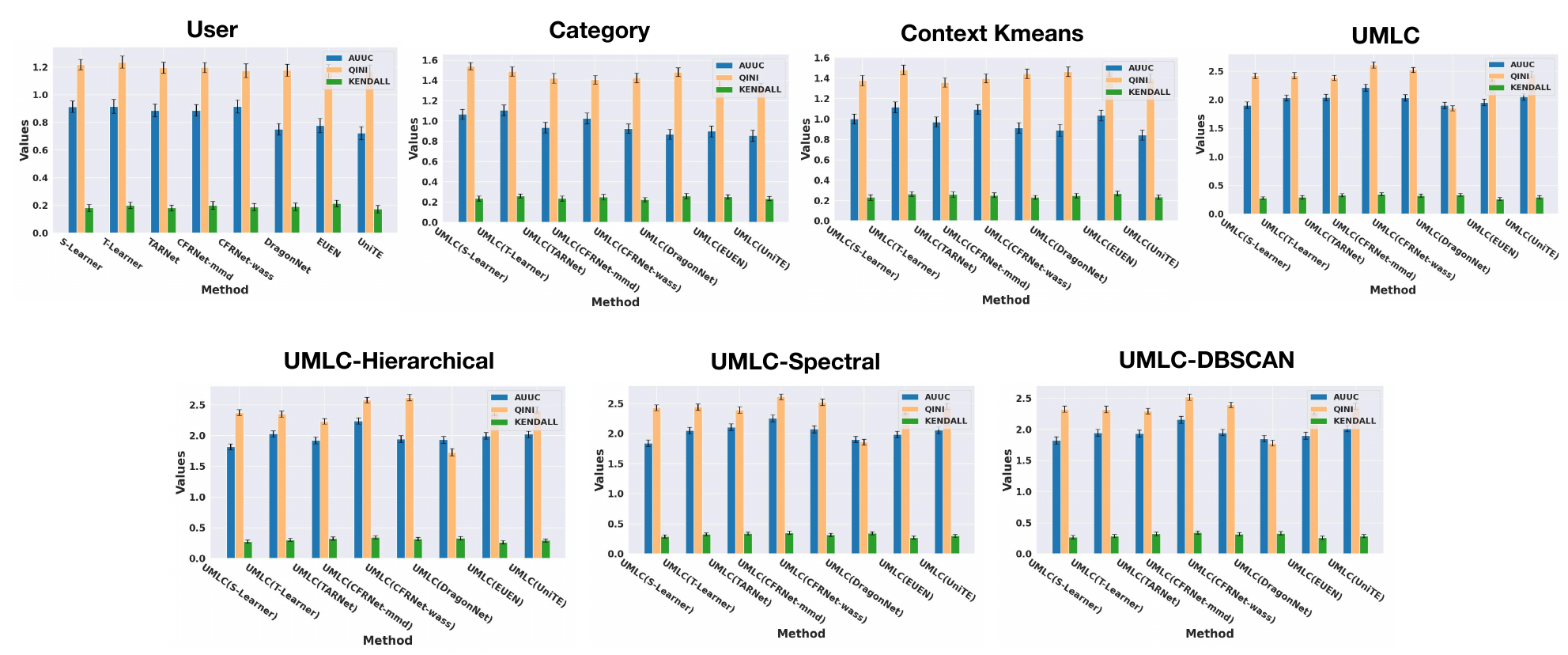}}
    \caption{Performance of the variations of our UMLC. User represents the results with only the user features. Category represents we use the ground truth category in the context features for learning the context embedding. Context K-means represents that we apply K-means on the context samples rather than embedding. UMLC-Hierarchical, UMLC-Spectral and UMLC-DBSCAN are the variants of our UMLC with different cluster methods on the context embedding.}
    \label{fig:additional_eval}
\end{figure*}

\end{document}